\def\BState{\State\hskip-\ALG@thistlm}
\newtheorem{theorem}{Theorem}
\newtheorem{proposition}{Proposition}
\newtheorem{remark}{Remark}
\newtheorem{lemma}{Lemma}
\newtheorem{corollary}{Corollary}
\newtheorem{definition}{Definition}
\newlength\myindent
\newcommand{\norm}[1]{\left\lVert#1\right\rVert}
\newcommand\bindent{%
  \begingroup
  \setlength{\itemindent}{\myindent}
  \addtolength{\algorithmicindent}{\myindent}
}
\newcommand\eindent{\endgroup}
\newcommand\myeq{\mathrel{\overset{\makebox[0pt]{\mbox{\normalfont\tiny\sffamily c}}}{\equiv}}}
    \titlespacing{\section}{0pt}{2ex}{1ex}
    \titlespacing{\subsection}{0pt}{1ex}{0ex}
    \titlespacing{\subsubsection}{0pt}{0.5ex}{0ex}
\newenvironment{tight}
    {
    \vspace{\abovedisplayskip}
    \setlength{\abovedisplayskip}{0pt}
    \setlength{\abovedisplayshortskip}{0pt}
    \setlength{\belowdisplayskip}{0pt}
    \setlength{\belowdisplayshortskip}{0pt}
    }{
    \vspace{10.0pt plus 2.0pt minus 5.0pt}
    }
\begin{document}

\title{Secure Aggregation in Federated Learning \\ using Multiparty Homomorphic Encryption}

\author{%
  \IEEEauthorblockN{Erfan Hosseini\IEEEauthorrefmark{1},
                    Shuangyi Chen\IEEEauthorrefmark{1},
                    and Ashish Khisti\IEEEauthorrefmark{1}}
                    
  \IEEEauthorblockA{\IEEEauthorrefmark{1}%
                    University of Toronto, \\
                    \{ehosseini2108@gmail.com, shuangyi.chen@mail.utoronto.ca, akhisti@ece.utoronto.ca\}}
}

\maketitle

\begin{abstract}
A key operation in federated learning is the aggregation of gradient vectors generated by individual client nodes. We develop a method based on  multiparty homomorphic encryption (MPHE) that enables the central node to compute this aggregate, while receiving only encrypted version of each individual gradients. Towards this end, we extend classical MPHE methods so that the decryption of the aggregate vector can be successful even when only a subset of client nodes are available. This is accomplished by introducing a secret-sharing step during the setup phase of MPHE when the public encryption key is generated. We develop conditions on the parameters of the MPHE scheme that guarantee correctness of decryption and (computational) security.  We explain how our method can be extended to accommodate client nodes that do not participate during the setup phase. We also propose a compression scheme for gradient vectors at each client node that can be readily combined with our MPHE scheme and perform the associated convergence analysis. We discuss the advantages of our proposed scheme with other approaches based on secure multi-party computation. 
Finally we discuss a practical implementation of our system, compare the performance of our system with different approaches, and demonstrate that by suitably combining compression with encryption the overhead over baseline schemes is rather small.
\end{abstract}


\IEEEpeerreviewmaketitle

\section{Introduction}

Federating learning (FL) is a framework for training machine learning models without transmitting local datasets\cite{DBLP:journals/corr/McMahanMRA16}. It does not require central storage and provides some level of privacy. In this framework, participants contribute to the learning of a global model by sharing their local model or local gradients instead of their data points. A central node aggregates the local updates and performs the global update. Even though, some level of privacy is achieved, the local model can still leak information about the local datasets. An eavesdropper can obtain information about the local datasets by observing the model parameters or gradients\cite{DBLP:conf/ccs/FredriksonJR15}\cite{DBLP:journals/corr/abs-1805-04049}. Secure aggregation is an approach in FL where users transmit encrypted version of the gradients, such that the central node is able to compute the aggregate but not infer any additional information about the individual gradients. 

Multiparty homomorphic encryption (MPHE) enables a group of distributed users, each having access to a private variable, to participate in jointly computing a function (e.g.,  aggregate sum) without leaking any additional information about each variable \cite{10.1007/978-3-642-29011-4_29}.  MPHE appears naturally suited for secure aggregation in FL applications, where users  encrypt their respective gradients and transmit to the  central node, which must facilitate in computing the aggregate. In this approach, users need to communicate with each other during an initial setup phase to generate a common public encryption key. Thereafter, when training the global model, the users  communicate with the central node their gradients which are encrypted using this public key. The decryption of the aggregate sum cannot be directly performed as the secret-key associated with this public key is not stored at any node. The decryption proceeds in a collaborative manner by transmitting shares generated at each user to the central node. 

 MPHE techniques however pose certain challenges in FL applications. First, the decryption step in MPHE requires the presence of all the users that participate in the generation of the public key. In FL applications this is not practical ---  the central node selects only a subset of users at each iteration of training. Secondly, MPHE schemes do not have provision for users to join who do not participate in the generation of the public key. Finally, in FL applications the gradient vectors are very high dimensional, consisting of several millions of parameters. Existing MPHE implementations  can be noticeably slow in these applications\cite{cryptoeprint:2017:956}\cite{cryptoeprint:2013:094}\cite{cryptoeprint:2015:345}\cite{10.1007/978-3-642-29011-4_29}. 
Motivated by these considerations, we propose a novel solution for MPHE that is suitable in practical FL applications. Our solution involves a setup phase where participants exchange shares of their respective secret-keys based on Shamir's secret-sharing. This approach guarantees that the presence of any $k$ users is sufficient for decryption, where $k$ is a design parameter. Secondly we discuss a simple extension of our approach that can enable users who were not present during the setup phase to participate and transmit their encrypted gradients. Finally we discuss a technique for compression of gradient vectors that provides noticeable speed up during the encryption and decryption steps. 

In related works, a number of different approaches have been developed for secure aggregation in FL. Secure multi-party computation (SMPC) is a popular method for achieving security. Although these methods achieve privacy, they usually require high communication complexity.  Reference \cite{10.1145/3133956.3133982} proposes a scheme for secure aggregation that can tolerate user dropouts by applying pairwise additive masking. However, this method has quadratic communication complexity. Prior works have proposed methods for reducing this complexity to $\mathcal{O}(N\log N)$ by reducing the connectivity \cite{choi2020communicationcomputation}. Furthermore, TurboAggregate  proposes a different approach to reduce the communication complexity to $O(N \log N)$ by creating multi-group circular aggregation structures\cite{DBLP:journals/corr/abs-2002-04156}.
Bell et al. \cite{10.1145/3372297.3417885} introduces the Harray graph for pairwise additive masking with Shamir's secret sharing, achieving $\mathcal{O}(N\log N)$ communication complexity but requiring new graphs and key pairs per iteration, unlike our one-time key generation approach. Homomorphic encryption is also widely used in FL, allowing clients to encrypt models and transmit them for aggregation by the central server. Some methods share the secret key among users to ensure secure collaboration while keeping it confidential from the server \cite{chen2024secure,Dong2020EaSTFLyEA,9664035,9148628,10.5555/3489146.3489179,10405290,jin2024fedmlheefficienthomomorphicencryptionbasedprivacypreserving,Ma_2022}. Zhang et al. \cite{10.5555/3489146.3489179} proposed BatchCrypt using quantization and HE, but it assumes no collusion between users and the server, a weak premise as the server can compromise users' model privacy. Ma et al. \cite{Ma_2022} proposed Multi-Key HE for secure aggregation but suffer from failed aggregation with client dropouts and require key regeneration and re-broadcasting for new clients. Our approach avoids both issues. Threshold HE systems \cite{10.5555/648118.746742, truex2019hybrid, 9355600, ma2021privacypreserving, tian2024lattice} overlook scenarios where new clients join after the setup phase. Another line of works \cite{10.1145/3338501.3357371, chen2023quadratic, 10064312} use Functional Encryption \cite{boneh2011functional}. HybridAlpha \cite{10.1145/3338501.3357371} employs multi-input Functional Encryption with a trusted third party (TTP), which may be unrealistic in practical setting. 


The rest of the paper is organized as follows: First, we give a background on secret sharing, homomorphic encryption and multiparty homomorphic encryption in Section~\ref{sec:background}. Then, we formally introduce the system model in Section~\ref{sec:statement}. We present proposed solution and provide correctness and security analysis for our scheme in Section~\ref{algorithm_definition}. Subsequently, we discuss the connection of our scheme to federated learning and propose a compression scheme to improve the communication and computation requirements in Section~\ref{sec:compression}. We then discuss an extension to allow for participation of new users. We provide experimental results to evaluate the performance of our framework in Section~\ref{sec:exper}.

\section{Background}
\label{sec:background}

\subsection{Notation}
We denote $[\cdot]_q$ the reduction of an integer modulo $q$, and $\lceil \cdot \rceil$, $\lfloor \cdot \rfloor$, $\lfloor \cdot \rceil$ the rounding to the next, previous, and nearest integer respectively. When applied to polynomials, these operations are performed coefficient-wise. We use regular letters for integers and polynomials, and bold letters for vectors. Given a probability distribution $\mathcal{D}$ over a ring $R$, $p \longleftarrow \mathcal{D}$ denotes sampling an element according to $\mathcal{D}$, and $p \longleftarrow R$ implicitly denotes uniform sampling in $R$. For a polynomial $a$, we denote its infinity norm by $\norm{a}$. We use $ \mathbb{Z}_q = [-\frac{q}{2}, \frac{q}{2})$ as the representatives for the congruence classes modulo $q$. Table~\ref{tab:notation_summary} provides the summary of the symbols used throughout this work.
\begin{table}[h]
    \centering
    \begin{tabular}{c|c}
        Symbol & Description \\
        \hline
        $\longleftarrow$ & uniform sampling \\
        $\norm{\cdot}$ & infinity norm \\
        $\norm{\cdot}_1$ & l-$1$ norm \\
        $\norm{\cdot}_2$ & l-$2$ norm \\
        $\mathbb{Z}_q$ & $ [-\frac{q}{2}, \frac{q}{2})$ \\
        $R_q$ & $\mathbb{Z}_q[X]/(X^n+1)$ \\
        
        $n$ & degree of the encryption polynomial \\ 
        $\lambda$ & security parameter \\
        $q$ & size of ciphertext space \\
        $p$ & size of plaintext space \\
        $N$ & number of clients \\
        $k$ & minimum available clients \\
        $d$ & dimensionality of input/gradient vector \\
        $s$ & dimesnionality of the compressed gradients \\
        $r$ & compression ratio \\
        $T$ & number of aggregation rounds \\
        $m$ & message \\
        $p_1$ & public random polynomial \\
        $sk$ & secret key \\
        $\boldsymbol{cpk}$ & common public key \\
        $\boldsymbol{pk}$ & public key \\
        $\boldsymbol{ct}$ & ciphertext \\
        ${\mathcal P}_t = \{a_i\}$ & set of available users \\
        $\boldsymbol{g}_i^t$ & gradient/input vector \\
        $\bar{\boldsymbol{g}}^t$ & aggregated gradient/input vector \\ 
        $A_{s, \chi}^q$ & decision-RLWE distribution \\ 
        $\delta$ & compression coefficient \\ 
        $\mathbf{\Phi}$ & compression matrix \\

    \end{tabular}
    \caption{Notation Summary}
    \label{tab:notation_summary}
\end{table}
\subsection{Shamir's Secret Sharing}
\label{shamir_scheme}
First we define $(N, k)$ secret sharing scheme. \cite{10.1145/359168.359176}
\begin{definition}
    Let $s \in \mathbb{Z}_p$ be the secret, where $p$ is a prime. We aim to divide the secret into $N$ chunks, such that 
    \begin{itemize}
        \item Knowledge of any $k$ chunks allow reconstruction of $s$
        \item Knowledge of any $k-1$ or fewer chunks would provide no information about $s$ 
    \end{itemize}
    A given algorithm with the mentioned properties is called a $(N, k)$ secret sharing scheme. 
\end{definition}

The algorithm works by constructing a random polynomial $f(x)$ of order $k-1$ so that $f(0) = s$ and $f(x) = s + \sum_{j=1}^{k-1} b_j x^j$, where $b_j \gets \mathbb{Z}_p$ are selected uniformly random. The shares are generated by distributing the evaluations of $f(\cdot)$ over a set of distinct evaluation points denoted by $x_j \in \mathbb{Z}_p$. We can reconstruct the polynomial $f(\cdot)$ and recover the secret using $k$ evaluation points . The set $\{(x_i, s_i)\}_{i=0}^{N-1} $ denotes secret shares of $s$. 

We discuss an extension of Shamir's secret sharing where the secret $s$ is a polynomial with coefficients in $\mathbb{Z}_p$. The set of secret shares $(x_i, s_i)$ are constructed according to the following
\begin{equation}
    \label{shamir_construction}
    s_i = s + \sum_{j=1}^{k-1} x_i ^ j t_{j},
\end{equation}
where $\{t_j\}_{j=1}^{k-1}$ is a set of $k-1$ polynomials of the same order as $s$ and coefficients uniformly sampled from $\mathbb{Z}_p$. The mentioned approach applies secret sharing to each coefficient individually using the same evaluation points. The following well-known result shows that given any set of $k$ secret shares, we are able to reconstruct the random coefficients.
\begin{theorem}{\cite{10.1145/359168.359176}}
    \label{shamir_reconstruction}
    Given any set of $k$ shares, $\{(x_{a_i}, s_{a_i})\}_{i=0}^{k-1}$ generated according to~\eqref{shamir_construction}, one can reconstruct the secret $s$ i.e., there exist a set of coefficients $\{r_i \}_{i=0}^{k-1}, r_i \in \mathbb{Z}_p$ such that
    \begin{equation}
        s = \sum_{i=0}^{k-1} s_{a_i} \cdot r_i.
        \label{eq:linear}
    \end{equation}
    In addition the polynomials $t_j$ in~\eqref{shamir_construction} can also be constructed using these shares, i.e., the set $\{r_{i, j}\}_{i=0}^{k-1},  r_{i, j} \in \mathbb{Z}_p$ exists so that $t_j = \sum_{i=0}^{k-1} s_{a_i} r_{i,j}$.
\end{theorem}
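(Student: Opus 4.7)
The plan is to reduce the reconstruction to a Vandermonde linear system and invert it, which is the standard recipe for Shamir's scheme extended to polynomial-valued secrets.

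First, I would collect the $k$ share-defining equations~\eqref{shamir_construction} into matrix form. Writing $\mathbf{u} = (s, t_1, t_2, \dots, t_{k-1})^\top$ for the unknowns and $\mathbf{y} = (s_{a_0}, s_{a_1}, \dots, s_{a_{k-1}})^\top$ for the observed shares, the relations $s_{a_i} = s + \sum_{j=1}^{k-1} x_{a_i}^{\,j} t_j$ become $\mathbf{y} = V \mathbf{u}$, where $V$ is the $k \times k$ Vandermonde matrix with $V_{i,j} = x_{a_i}^{\,j}$ for $i,j \in \{0, \dots, k-1\}$. This is the key structural observation that turns the problem into linear algebra, with the caveat that the "scalars" in $\mathbf{u}$ and $\mathbf{y}$ are polynomials in $\mathbb{Z}_p[X]/(X^n+1)$ rather than field elements.

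Next, I would argue invertibility of $V$ over $\mathbb{Z}_p$. Since $p$ is prime, $\mathbb{Z}_p$ is a field, and because the evaluation points $x_{a_0}, \dots, x_{a_{k-1}}$ are distinct in $\mathbb{Z}_p$, the Vandermonde determinant $\prod_{i<\ell}(x_{a_\ell} - x_{a_i})$ is a nonzero element of $\mathbb{Z}_p$, hence a unit. Therefore $V^{-1}$ exists with entries in $\mathbb{Z}_p$. Define $r_i := (V^{-1})_{0,i}$ and $r_{i,j} := (V^{-1})_{j,i}$ for $i \in \{0,\dots,k-1\}$ and $j \in \{1,\dots,k-1\}$; these are the claimed reconstruction coefficients and depend only on the evaluation points, not on the secret or the random polynomials.

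Finally, I would handle the polynomial-valued nature of the secret by invoking the observation already made in the text preceding the theorem: the scheme applies secret sharing to each coefficient of $s$ (and of the masking polynomials $t_j$) independently, but using the same evaluation points $x_{a_i}$. Since the matrix $V$ is identical for every coefficient position, multiplying $\mathbf{y} = V \mathbf{u}$ on the left by $V^{-1}$ recovers each coefficient of $s$ and of each $t_j$ as the same $\mathbb{Z}_p$-linear combination of the corresponding coefficients of the $s_{a_i}$. Collecting these coefficient-wise equalities yields the polynomial identities $s = \sum_i r_i \, s_{a_i}$ and $t_j = \sum_i r_{i,j} \, s_{a_i}$, as required.

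The only subtle point, and the one I would be careful about in writing up, is that the coefficients $r_i, r_{i,j}$ live in $\mathbb{Z}_p$ (coming from inverting a Vandermonde over the field $\mathbb{Z}_p$) while $s, t_j, s_{a_i}$ live in the polynomial ring; I would emphasize that scalar multiplication of a ring element by a $\mathbb{Z}_p$-scalar is well-defined and commutes with the coefficient-wise structure, so the coefficient-wise argument lifts cleanly to an identity between polynomials. No non-routine computation beyond this is needed.
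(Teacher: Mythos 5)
Your proof is correct and is the standard argument: the paper itself does not prove this theorem but simply cites Shamir's original work, and your Vandermonde-inversion route (equivalent to Lagrange interpolation, with $r_i$ and $r_{i,j}$ read off as rows of $V^{-1}$ over the field $\mathbb{Z}_p$, then lifted coefficient-wise to polynomial-valued secrets) is exactly the classical proof. The one hypothesis you rely on --- distinctness of the evaluation points, which makes the Vandermonde determinant a unit --- is indeed guaranteed by the scheme's definition in Section~\ref{shamir_scheme}, so there is no gap.
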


The following shows the additive property of Shamir's secret sharing scheme following linearity in~\eqref{eq:linear}.
\begin{remark}
    \label{additivie_property_shamir}
    Let $\{x_i, s_{1, i} \}$ and $\{x_i, s_{2, i} \}$ be the set of secret shares for $s_1$ and $s_2$ generated according to $(N,k)$ Shamir's scheme. The set defined as $\{x_i, s_{1, i}+s_{2, i} \}$ are secret shares of $s_1 + s_2$.
\end{remark}
Furthermore, it is known that any collection of $k-1$ secret shares are mutually independent. 
\begin{theorem}
    \label{independence_shamir_shares}
    Let $\mathcal{S} = \{s_{a_0}, ..., s_{a_{k-2}}\}$ denote an arbitrary set of $k-1$ secret shares generated according to $(N,k)$ Shamir's scheme. The shares are mutually independent and distributed uniformly over $Z_p$. 
\end{theorem}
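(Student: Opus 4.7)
The plan is to recognize that the randomness in Shamir's construction passes through a linear map to the shares, and to prove this map is a bijection on $\mathbb{Z}_p^{k-1}$. Once bijectivity is established, uniformity of the input (the independent masking coefficients) transfers to joint uniformity of the $k-1$ shares, which is equivalent to mutual independence with each marginal uniform on $\mathbb{Z}_p$.

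First I would fix $k-1$ distinct nonzero evaluation points $x_{a_0}, \ldots, x_{a_{k-2}}$ and treat the secret $s \in \mathbb{Z}_p$ as an arbitrary constant. From~\eqref{shamir_construction}, the vector of shares $(s_{a_0}, \ldots, s_{a_{k-2}})$ is an affine function of $(t_1, \ldots, t_{k-1}) \in \mathbb{Z}_p^{k-1}$, with constant shift $s \cdot \mathbf{1}$ and linear part $M$ defined by $M_{ij} = x_{a_i}^{j}$ for $i \in \{0, \ldots, k-2\}$ and $j \in \{1, \ldots, k-1\}$.

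Next I would establish that $\det M \neq 0$ in $\mathbb{Z}_p$. Factoring $x_{a_i}$ out of the $i$-th row gives $M = DV$, where $D = \mathrm{diag}(x_{a_0}, \ldots, x_{a_{k-2}})$ and $V$ is the standard Vandermonde matrix on the nodes $x_{a_0}, \ldots, x_{a_{k-2}}$. Since the evaluation points are distinct, $\det V = \prod_{0 \le i < j \le k-2}(x_{a_j} - x_{a_i})$ is nonzero in the field $\mathbb{Z}_p$, and $\det D \neq 0$ because each evaluation point is nonzero. Hence $M$ is invertible, so the affine map $\phi : (t_1, \ldots, t_{k-1}) \mapsto (s_{a_0}, \ldots, s_{a_{k-2}})$ is a bijection from $\mathbb{Z}_p^{k-1}$ onto itself. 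Because $(t_1, \ldots, t_{k-1})$ is uniform on $\mathbb{Z}_p^{k-1}$, the image tuple is also uniform on $\mathbb{Z}_p^{k-1}$, which is precisely the statement of mutual independence with uniform marginals.

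Finally, to handle the polynomial extension used in~\eqref{shamir_construction}, I would apply the above argument coefficient by coefficient: the construction employs a common set of evaluation points for every coefficient position, and the coefficients of the masking polynomials $t_j$ are sampled independently and uniformly from $\mathbb{Z}_p$, so a separate instance of the same bijection holds at each position. Joint uniformity across positions then follows from independence of the coefficient samples. The principal technical content throughout is the non-vanishing of $\det M$, and this in turn relies entirely on the standing assumption that the Shamir evaluation points are distinct and nonzero in $\mathbb{Z}_p$; beyond this the argument is routine.
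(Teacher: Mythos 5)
Your proof is correct, and it is more explicit than what the paper actually provides: the paper leaves this theorem's proof commented out, and the hidden sketch is a counting argument (each tuple of shares corresponds to a unique degree-$(k-1)$ polynomial, and all $p^{k}$ such polynomials are equiprobable). That sketch, as written, reasons about $k$ shares being jointly uniform over $\mathbb{Z}_p^{k}$, which is not quite right once the secret $s$ is fixed --- the $k$ shares then lie on a proper affine subspace --- whereas the theorem is correctly stated for $k-1$ shares. Your route fixes $s$, isolates the randomness in $(t_1,\dots,t_{k-1})$, and shows the affine map to the $k-1$ shares is a bijection via $\det(DV) = \det D \cdot \det V \neq 0$, with $V$ Vandermonde on distinct points and $D$ diagonal with nonzero entries. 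This is really the same underlying fact (invertibility of the Vandermonde system is what makes the paper's polynomial-counting bijection work), but your version makes the dependence on the hypotheses visible: you need the evaluation points to be not only distinct but also nonzero, a condition the paper does not state explicitly in Section~\ref{shamir_scheme} even though it is essential (a share at $x=0$ would equal $s$ itself and the corresponding row of $D$ would be singular). Your coefficient-wise extension to polynomial secrets, using independence of the coefficient samples, matches how the paper defines the extension in~\eqref{shamir_construction} and is sound.
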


\subsection{BFV Encryption Scheme}
\label{BFV}
The Brakerski-Fan-Vercauteren cryptosystem \cite{cryptoeprint:2012:144} is a ring learning-with-errors scheme,  that supports both additive and multiplicative homomorphic operations. The ciphertext space is $R_q = \mathbb{Z}_q[X]/(X^n+1)$, the quotient ring of the polynomials with coefficients in $\mathbb{Z}_q$ modulo $(X^n+1)$, where $n$ is a power of 2. Unless otherwise stated, we consider the arithmetic in $R_q$, so polynomial reductions are omitted in the notation. The plaintext space is the ring $R_p = \mathbb{Z}_p[X]/(X^n+1)$ for $p<q$. We denote $\Delta = \lfloor \frac{q}{p} \rfloor$. The scheme is based on two distributions: (i) the key distribution over $R_3$ with uniformly distributed coefficients;  (ii) the RLWE error distribution $\chi$ over $R_q$ has coefficients distributed according to a centred discrete Gaussian with variance $\sigma^2$ and truncated support over $[-B, B]$. 

We next discuss the operation of the BFV cryptosystem as our subsequent methods will rely on it. 

\bigskip
\noindent{\em Key-Generation Functions}:
\begin{itemize}
    \item $BFV.SecKeyGen()$: Sample $s \longleftarrow R_3$ and output the secret key $sk=s$
    \item $BFV.PubKeyGen(s)$: Sample $p_1 \longleftarrow R_q$, and $e_{\boldsymbol{pk}} \longleftarrow \chi$  and output the public key: \begin{equation}
    \boldsymbol{pk} = (p_0, p_1) = (-(sp_1+e_{\boldsymbol{pk}}), p_1).\label{eq:pk}
    \end{equation}
 \end{itemize}
 
 \bigskip
 \noindent{ \em Encryption and Decryption Functions}:
  Since we allow the input  vector ${\mathbf g}$ to be in the space $\mathbb{Z}^d$ while the plaintext is in the ring $R_t$, therefore we define a function $m = Encode({\mathbf g})$ and a decoding function ${\mathbf g} = Decode(m)$ in our description below.
 
 \begin{itemize}
    \item $BFV.Encrypt(\boldsymbol{pk}, \boldsymbol{g})$: Let $\boldsymbol{pk} = (p_0, p_1)$. Sample $u \longleftarrow R_3$ and $e_0, e_1 \longleftarrow \chi$. Output the ciphertext
    \begin{equation}
        \boldsymbol{ct} = (c_0, c_1) =  (\Delta \cdot m + up_0 + e_0, u p_1 + e_1)         \label{BFV_encryption}
    \end{equation}
  where recall that  $\Delta = \lfloor \frac{q}{p} \rfloor$ as introduced earlier.
    \item $BFV.Decrypt(s, \boldsymbol{ct})$: Let $ct = (c_0, c_1)$ and compute $m' = [\lfloor \frac{p}{q} [c_0 + c_1 s] \rceil]_p$.

\end{itemize}
Upon substituting~\eqref{eq:pk} and~\eqref{BFV_encryption} can show that:
\begin{equation}
    \label{eq:decrypt}
    c_0 + sc_1 = \Delta \cdot m + e_{\boldsymbol{ct}}, 
\end{equation}
where
\begin{equation}
    \label{eq:ect}
    e_{\boldsymbol{ct}} = e_0 + se_1 - ue_{\boldsymbol{pk}}  
\end{equation}
denotes the error in the ciphertext. From the properties of the BFV cryptosystem it can be established that~\cite{cryptoeprint:2012:144} the decryption is correct i.e., we can guarantee that $m' =m$, as long as this error stays small, $\norm{e_{\boldsymbol{ct}}} < \frac{q}{2p}$. We have 
\begin{align}
    \label{BFV_correctness}
    \norm{e_{\boldsymbol{ct}}} &\leq \norm{e_0 + se_1 - ue_{\boldsymbol{pk}}} \leq \norm{e_0} + \norm{se_1} +  \norm{ue_{\boldsymbol{pk}}} \nonumber \\
    &\leq B + n\norm{s}\norm{e_1} + n\norm{u}\norm{e_{\boldsymbol{pk}}} \leq B(2n+1).
\end{align}
Therefore, the ciphertext decryption is achievable if $B(2n+1)<\frac{q}{2p}$.
We can map the plaintext $m$ into the input vector ${\mathbf g}$ using the $Decode(\cdot)$ function.

We note that the security of the BFV scheme relies on the following property \cite{10.1145/2535925}.
\begin{definition}
    \label{RLWE_definition}
    (Hardness of Decision-RLWE) For a security parameter $\lambda$, let $q=q(\lambda)>2$ be an integer. For a random $s \in R_q$ (the secret) and a noise distribution $\chi = \chi(\lambda)$ over $R_q$. Denote by $A_{s, \chi}^q$ the distribution obtained by choosing a uniformly random $a \gets R_q$ and a noise term $e \gets \chi$ and outputting $(a, as+e)$. The hardness of RLWE problem states that it is computationally hard for a polynomial time adversary to distinguish between $A_{s, \chi}^q$ and a uniform distribution over $R_q^2$.\cite{10.1145/2535925}
    
 \end{definition}
    \begin{remark}
    We denote the security parameter by $\lambda$. Parameters of the system are chosen based on the desired security level. Larger order polynomials, $n$, and noise variance, $B$, would result in a more secure scheme. However, a larger ciphertext space, $q$, decreases the security of the network \cite{cryptoeprint:2012:144}.   
\end{remark}

The above structure in the BFV scheme immediately yields the following additive property,

\begin{proposition}
\label{additivi_property_BFV}
    Let us consider a BFV scheme with a secret-key $s$ and the associated public-key $\boldsymbol{pk}$,
    \begin{equation}
        \boldsymbol{ct}_1 = (ct_{1,0}, ct_{1,1}) = BFV.Encrypt(pk, \boldsymbol{g_1}),
    \end{equation}
    \begin{equation}
        \boldsymbol{ct}_2 = (ct_{2,0}, ct_{2,1}) = BFV.Encrypt(pk, \boldsymbol{g_2}).
    \end{equation}
    Let $\boldsymbol{ct} = (ct_{1,0} + ct_{2,0}, ct_{2,1} + ct_{2,1})$ be the addition of the two ciphertexts. Furthermore, let $e_{\boldsymbol{ct}_1}$ and $e_{\boldsymbol{ct}_2}$ denote the noise of the given ciphertexts defined as \eqref{eq:decrypt}
    \begin{equation}
        BFV.Decrypt(s, \boldsymbol{ct}) = \boldsymbol{g_1} + \boldsymbol{g_2},
    \end{equation}
    if $\norm{e_{\boldsymbol{ct}_1} + e_{\boldsymbol{ct}_2}} < \frac{q}{2p}$.
    
\end{proposition}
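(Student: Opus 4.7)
The plan is to reduce the correctness of homomorphic addition directly to the single-ciphertext decryption identity~\eqref{eq:decrypt}, which states that for any valid BFV ciphertext $\boldsymbol{ct}_j=(ct_{j,0},ct_{j,1})$ encrypting a plaintext $m_j$ under the secret key $s$, one has $ct_{j,0}+s\cdot ct_{j,1}=\Delta\cdot m_j+e_{\boldsymbol{ct}_j}$ as an identity in $R_q$. I would write this identity for $j=1$ and $j=2$ and then add the two equations, exploiting the linearity of the expression $c_0+sc_1$ in the ciphertext components.

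Summing these two relations yields
\begin{equation}
(ct_{1,0}+ct_{2,0})+s\,(ct_{1,1}+ct_{2,1}) \;=\; \Delta\cdot(m_1+m_2)+(e_{\boldsymbol{ct}_1}+e_{\boldsymbol{ct}_2}),
\end{equation}
which, by the definition of the summed ciphertext $\boldsymbol{ct}=(ct_{1,0}+ct_{2,0},\,ct_{1,1}+ct_{2,1})$, is precisely the statement that $\boldsymbol{ct}$ encrypts $m_1+m_2$ with an effective error $e_{\boldsymbol{ct}_1}+e_{\boldsymbol{ct}_2}$. The next step is to apply the decryption map $[\lfloor \tfrac{p}{q}[c_0+sc_1]\rceil]_p$ to $\boldsymbol{ct}$; substituting the summed identity and distributing the scaling by $p/q$ produces $(m_1+m_2)+\tfrac{p}{q}(e_{\boldsymbol{ct}_1}+e_{\boldsymbol{ct}_2})$ prior to rounding.

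It remains to argue that the nearest-integer rounding eliminates the error contribution and returns $m_1+m_2$ modulo $p$. This is where the hypothesis $\norm{e_{\boldsymbol{ct}_1}+e_{\boldsymbol{ct}_2}}<\tfrac{q}{2p}$ is used: each coefficient of $\tfrac{p}{q}(e_{\boldsymbol{ct}_1}+e_{\boldsymbol{ct}_2})$ then has magnitude strictly below $\tfrac{1}{2}$, so the rounding annihilates it, exactly mirroring the correctness argument leading to~\eqref{BFV_correctness}. Applying $Decode(\cdot)$ to the recovered plaintext $m_1+m_2$ finally gives $\boldsymbol{g_1}+\boldsymbol{g_2}$, since the encoding used in the paper is additive over $\mathbb{Z}_p$.

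I do not anticipate a substantive obstacle: the argument is a one-line linearity step followed by reuse of the same error-budget inequality already verified for a single ciphertext. The only point demanding mild bookkeeping is verifying that the $R_q$ arithmetic does not interfere with the identity, i.e., that the representative of $\Delta(m_1+m_2)+(e_{\boldsymbol{ct}_1}+e_{\boldsymbol{ct}_2})$ in $\mathbb{Z}_q=[-q/2,q/2)$ aligns coefficient-wise with the quantity on which the rounding operates; the stated bound on the summed error is exactly what guarantees this, so the proof should follow cleanly.
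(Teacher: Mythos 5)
Your argument is correct and is exactly the one the paper has in mind: it presents this proposition as following ``immediately'' from the structure of the scheme, namely the linearity of $c_0+sc_1$ in the ciphertext components applied to the identity~\eqref{eq:decrypt}, followed by the same rounding/error-budget argument as in~\eqref{BFV_correctness}. Your write-up simply makes that implicit linearity step explicit, so there is nothing to add.
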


Note that Prop.~\ref{additivi_property_BFV} indicates that if two inputs ${\mathbf g_1}$ and ${\mathbf g_2}$ are encrypted separately using the public-key $pk$, and we add the associated ciphertexts, then 
the decryption of this quantity using the secret-key $s$ will result in the sum ${\mathbf g_1} + {\mathbf g_2}$.
\subsection{Multiparty Homomorphic Encryption}
\label{MPHE}
Multi-party homomorphic encryption (MPHE) is a technique where $N$ users, each with a private input, can collectively compute a function over these inputs (e.g., an aggregate of the inputs) without leaking any additional information of their private inputs. To motivate MPHE, consider a setting with $N=3$ users, where user $i$ has access to an input ${\mathbf g_i}$, which must be encrypted and transmitted to a central node such that only the sum ${\mathbf g_1}+{\mathbf g_2} + {\mathbf g_3}$ is only known publicly.
The approach suggested by Prop.~\ref{additivi_property_BFV} cannot be directly used. If the decryption key $s$ is revealed to the central node, it can in fact decrypt the ciphertexts $\boldsymbol{ct}_i$ to directly obtain ${\mathbf g_i}$. MPHE enables an interesting application of applying homomorphic encryption, where the secret key $s$ is not revealed at any node but only certain shares of $s$ are made available to each node.
 
The algorithm consists of two phases as discussed below \cite{Mouchet2020MultipartyHE}:

\subsubsection{Setup Phase} In this phase, users agree on a collective public key. This phase provides each client with a common public key and a secret share of the collective secret key. The clients are assumed to have access to a synchronized source of randomness that is used during setup phase.

The generation of the public key proceeds as follows (for the node $i$):
\begin{enumerate}
    \item Sample common random polynomial $p_1 \longleftarrow R_3$. This is known to all the participating nodes.  
    \item Sample a secret key $s_i \longleftarrow R_3$ (at each node $i$).
    \item Sample $e_i \longleftarrow \chi$ and disclose $p_{0,i} = -(p_1s_i +e_i)$.
    \item Compute $p_0 = \sum_{j=0}^{N-1} p_{0,j}$.
\end{enumerate}
The collective public key $\boldsymbol{cpk}=(p_0, p_1)$ is the output of this phase.
\begin{proposition}
    The output of the setup step, $\boldsymbol{cpk}$, is a public key corresponding to  $s = \sum_i s_i  \label{eq:sum}$
\end{proposition}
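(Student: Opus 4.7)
The plan is to verify directly, by substituting the definition of each node's contribution $p_{0,i}$, that the aggregated first component $p_0 = \sum_i p_{0,i}$ has exactly the algebraic form of a BFV public key as specified in~\eqref{eq:pk}, with the underlying secret being $s = \sum_i s_i$.

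First, I would substitute the definition $p_{0,i} = -(p_1 s_i + e_i)$ into $p_0 = \sum_{i=0}^{N-1} p_{0,i}$ and pull the common random polynomial $p_1$ outside of the sum. This yields
\begin{equation*}
p_0 = -\Bigl( p_1 \sum_{i=0}^{N-1} s_i + \sum_{i=0}^{N-1} e_i \Bigr) = -(p_1 s + e),
\end{equation*}
where I define $s = \sum_i s_i$ and $e = \sum_i e_i$. Comparing this with~\eqref{eq:pk}, the pair $(p_0, p_1)$ is precisely a BFV public key for the secret $s$ with aggregate error term $e$ in place of a single $\chi$-sample. Since $p_1$ is the same common random polynomial used in $BFV.PubKeyGen$, no further algebraic manipulation is required.

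The one substantive observation is that the error component has changed: instead of a single sample from $\chi$, the effective public-key error is $e = \sum_i e_i$, a sum of $N$ independent samples from $\chi$, and the effective secret $s$ is a sum of $N$ samples from $R_3$ rather than a single one. I would therefore note that $\|e\| \leq NB$ and $\|s\| \leq N$, so the noise and secret remain bounded (just with an $N$-factor inflation), which means the public-key still functions as a valid BFV public key in the sense that correctness of encryption/decryption still holds provided the parameters are chosen large enough to absorb this inflation when the correctness condition analogous to~\eqref{BFV_correctness} is checked.

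The main obstacle is essentially cosmetic rather than conceptual: the proposition is a direct algebraic identity, and the only care needed is to flag that the error term is now an aggregate rather than a fresh $\chi$-sample, so any downstream correctness argument (and indeed the security argument that will invoke the RLWE assumption of Definition~\ref{RLWE_definition}) must account for this. No Shamir-sharing step enters this particular statement; it is purely the additive structure of the BFV key-generation equation combined with the linearity of $p_1 \cdot (\cdot)$ that makes the claim go through.
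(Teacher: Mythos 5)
Your substitution argument is exactly the paper's proof, just written out in full: the paper simply states that the claim follows by substituting $p_{0,i}=-(p_1 s_i + e_i)$ into $p_0=\sum_i p_{0,i}$, which is precisely the computation you perform. Your additional remark that the effective error $e=\sum_i e_i$ and secret $s=\sum_i s_i$ are inflated by a factor of $N$ is correct and is exactly what the paper accounts for later in its correctness analysis (where it bounds $\norm{s}\leq N$ and the ciphertext noise by $B(2nN+1)$), so the proposal is sound and matches the paper's approach.
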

\begin{proof}
This claim follows immediately from construction above by substituting the expression for $p_{0,i}$ into the expression for $p_0$.
\end{proof}
\subsubsection{Decryption}
In this phase, we assume that clients have access to a publicly known ciphertext $\boldsymbol{ct}$, and wish to compute the encrypted message $\boldsymbol{g}$. The ciphertext $\boldsymbol{ct}$ is given by\footnote{In practice $\boldsymbol{ct}$ is generated in a distributed fashion which is described in section \ref{algorithm_definition}. In our proposed example we can consider $\boldsymbol{ct} = \boldsymbol{ct}_1+ \boldsymbol{ct}_2 + \boldsymbol{ct}_3$ as discussed before.} $\boldsymbol{ct} = BFV.Encrypt(\boldsymbol{cpk} , \boldsymbol{g})$. 
We wish to apply the decryption operation associated with the secret-key $s$ to compute ${\mathbf g} = BFV.Decrypt(s, {\mathbf{ct}})$. Naturally we do not have access to $s$ at any node in the network as $s$ is given by~\eqref{eq:sum}. 
According to the BFV decryption scheme (see~\eqref{eq:decrypt}), given $\boldsymbol{ct} = (c_0, c_1)$, we wish to compute $c_0 + s c_1$ using~\eqref{eq:sum}. Note that since, $c_0 + s c_1 = c_0 + \sum_i s_i c_1$ it would be sufficient for each user to transmit $s_i c_1$ to make decryption possible. However, this would leak $s_i$ as $c_1$ is publicly known. Therefore, an additional noise term is added to protect $s_i$. This noise term is called smudging noise and is sampled randomly from $\chi_{smg}$, a uniform distribution over $[-B_{smg}, B_{smg}]$. 

This phase proceeds as following for user $i$
\begin{enumerate}
    \item Samples $e_i \gets \chi_{smg}$.
    \item disclose $h_i = s_i c_1 + e_i$.
\end{enumerate}
The output of this phase is $m' = [\lfloor \frac{p}{q} [c_0 + \sum_{i} h_{i} ] \rceil]_p$. The message can then be recovered by $Decode(m')$.

We discuss the correctness and security of the above scheme under more general conditions in Section~\ref{section_analysis}.

\section{System Model}
\label{sec:statement}
We focus on a setup with a central node and $N$ clients. We assume that client $i$ has a sequence of input vectors $\{\boldsymbol{g}_i^t\}_{t=0}^{t=T-1}$. At iteration $t$, some subset of $k$ users denoted by the set ${\mathcal P}_t$, are available. We aim to compute the aggregate of the input vectors denoted by $\Bar{\boldsymbol{g}}^t = \sum_{i \in {\mathcal P}_t} \boldsymbol{g}_{i}^t$. 
We wish to compute $\{ \Bar{\boldsymbol{g}}^t \}_{t=0}^{T-1}$ in a privacy-preserving manner meaning that no information should be leaked about the individual inputs.  

\paragraph{Threat Model and Assumptions}
We assume all the parties to be honest-but-curious meaning they correctly follows the algorithm and protocol but may try to learn private inputs from their observations. We also assume there can be a limited number of dishonest parties who may try to infer honest parties' private information. Dishonest parties may collude with each other and share observations to infer private inputs of honest parties. In our protocol, the number of dishonest parties is constrained to be at most $k-1$ out of a total of $N+1$ parties. More specifically, there can be $k-1$ colluded clients acting dishonestly or a combination of $k-2$ clients and the central server colluded together.


We assume that the system allows for a setup phase when all the clients are responsive and a training phase where at-least $k$ clients are responsive.

\paragraph{Privacy Goals}
Our main objective is to enable secure aggregation of clients' gradient vectors in the above system and the threat model. We aim to protect each client's gradient vector confidentiality, that is, during iterations, the server should not learn any honest client's gradient vector and client $i$ should not learn the gradient vector of client $j$ if $j \neq i$.

\paragraph{Baseline}

Note that under the special case of $k=N$, i.e., the presence of all users is guaranteed at each iteration, the method mentioned in Section \ref{MPHE} can be used. In this example, users run the key generation step to generate $\{s_i \}$ and $\boldsymbol{cpk}$. Then at each iteration, users encrypt their inputs and transmit it to the central node, $\boldsymbol{ct}_i^t = BFV.Encrypt(\boldsymbol{cpk}, \boldsymbol{g}_i^t)$. The central node performs aggregation over the encrypted values $\boldsymbol{ct}^t = \sum_i \boldsymbol{ct}^t_i$. The result of aggregation is the encrypted aggregated gradient under the same key such that $BFV.Decrypt(s, \boldsymbol{ct}) = \sum_i \boldsymbol{g}^t_i$, where $s=\sum s_i$ is the common secret key corresponding to the $\boldsymbol{cpk}$. 
The users then proceed with the decryption stage to recover the aggregated gradients. The issue with the above protocol is that it requires $k=N$ i.e., all users must be present during the decryption phase and hence the entire training phase. In practice in federated learning applications only a subset of users are selected at each iteration and the requirement that $k=N$ cannot be satisfied in practice. 

 {\em Adaptive Secure Aggregation {\bf ASA}}: As another baseline, one way of approaching the user selection problem in the above extension is by running the entire process at each iteration for the set of available users. At any given iteration, the central node selects a subset of $k$ users among the available set. The users perform the setup phase followed by the decryption phase in each iteration. The scheme is regarded as adaptive secure aggregation (ASA) throughout this work.
 We will see in the experiments that the cost of having to run a setup phase at each iteration is higher than our proposed schemes. Furthermore the scheme assumes that client dropouts cannot occur between the setup phase and encryption phase in each iteration, which might not hold in practice.

\section{Robust Secure Aggregation (RSA)}
\label{algorithm_definition}
Our proposed protocol is an extension of the MPHE scheme in Section~\ref{MPHE} to allow the participation of any $k$ out of $N$ users in the decryption phase.
It consists of two phases:
\begin{enumerate}
    \item \textbf{Setup phase} 
     aims to produce a collective public key and secret shares of the collective secret key. Secret shares are generated in order to make decryption possible without the need to reveal the underlying secret key. 
    \item \textbf{Aggregation} step performs the aggregation over the encrypted values and then performs collaborative decryption by extending the approach in Section~\ref{MPHE}.
\end{enumerate}
The setup phase is only performed once at the beginning of the process. However, the aggregation step can be run multiple times.

\subsection{Setup Phase}
\label{setup_phase_our}
\begin{algorithm}[h]
\caption{Setup phase}
\label{key_generation_algorithm}
\begin{algorithmic}
    \STATE \textbf{Inputs}: encryption space $R_q$, noise distribution $\chi$, common random polynomial $p_1$, set of evaluation points $\{x_i\}_{i=0...N-1}$
    \STATE \textbf{Client $i$:} 
    \bindent
    \STATE $s_i \gets R_3$ 
    \STATE $t_{i, l} \gets R_q$ for $l=1 ... k-1$ 
    \STATE $s_{i,j} = s_i + \sum_{l=1}^{k-1} t_{i,l} x_j^l$
    \STATE securely transmit $s_{i, j}$ to client $j$
    \STATE $s'_i = \sum_j s_{j,i}$
    \STATE sample $e_i \gets \chi$ 
    \STATE transmit $p_{0,i} = -(p_1 s_i + e_i)$ to the central node
    \eindent
    \STATE \textbf{Central node:} 
    \bindent
    \STATE $p_0 = \sum_i p_{0, i}$
    \STATE output $\boldsymbol{cpk}=(p_0, p_1)$
    \eindent
\end{algorithmic}
\end{algorithm}

This phase aims to generate (i) a key pair, common secret key, and public key, denoted by $(s, \boldsymbol{cpk})$ and (ii) a secret share of the secret-key at each node $s'_i$.

The common public key  $\boldsymbol{cpk}$ is available to all of the participants of the network whereas the associated secret key $s = \sum_i s_i$ is not known to any individual. Until this point, the execution of the setup phase is the same as in Section~\ref{MPHE}. We note that the original MPHE algorithm requires access to common randomness. We relax this assumption by generating the common random polynomial in the central node and broadcasting it to the users. 

The additional component is that the knowledge of the common secret key is distributed in the network using a $(N, k)$ Shamir's secret sharing scheme. Thus instead of simply having access to $s_i$, each node has access to a share $s'_i$ using an evaluation point $x_i$ such that the collection of any $k$ shares suffices to reconstruct the secret-key $s$. This property intuitively allows for the decryption to proceed even when only $k$ nodes are present. We note that the set of evaluation points are generated and transmitted to the users by the central node.

\begin{remark}
    We assume that all communication takes place through the central node. In order to facilitate private communication among users through a public channel, we use a public key encryption system similar to \cite{10.1145/3133956.3133982} \cite{DBLP:journals/corr/abs-2002-04156}.
\end{remark}

To generate the necessary shares, the main steps of client $i$ are as follows:
\begin{enumerate}
    \item Sample $s_i \longleftarrow R_3$.
    \item Generate secret shares of $s_i$ denoted by the set $\{(x_j, s_{i,j}) \}_{j=1}^N$ as per the following expression  $s_{i,j} = s_i + \sum_{l=1}^{l=k-1} t_{i,l} x_j ^ l$,
    where $\{ t_{i,l} \}$ are polynomials with coefficients uniformly sampled from $\mathbb{Z}_q$, see~\eqref{shamir_construction}. 
    \item Securely transmit $s_{i,j}$ to client $j$ through the central node.
    \item Aggregate received shares 
    \begin{equation}
        \label{secret_shares_setup}
        s'_i = \sum_j s_{j,i}.    
    \end{equation}
    
\end{enumerate}

\begin{proposition}
    \label{secret_shares_addition}
    The quantities $\{s'_i \}$ computed in step (5) above  in the key-generation phase are secret shares of $s=\sum_i s_i
    $.
    \begin{proof}
    Let $f_i(x)$ denote the random polynomial used by user $i$ to generate secret shares $f_i(x) = s_i + \sum_{l=1}^{l=k-1} t_{i,l} x ^ l$. We note that shares are generated by evaluating $f_i(x)$ at corresponding points: $s_{i,j} = f_i(x_j)$. Let $f(x)$ denote the global random polynomial constructed by summing the local random polynomials $f(x) = \sum_{i=0}^{N-1} f_i(x)$.
    We show that $\{s'_i\}_{i=0}^{N-1}$ is evaluations of $f(x)$ at $\{x_i\}_{i=0}^{N-1}$ 
    \begin{equation}
        s'_i = \sum_{j=0}^{N-1} s_{j,i} = \sum_{j=0}^{N-1} f_j(x_i) = f(x_i),
    \end{equation}
    which follows from Proposition~\ref{secret_shares_addition} and Theorem~\ref{independence_shamir_shares}.
    \end{proof}
\end{proposition}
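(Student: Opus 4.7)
The plan is to exhibit, from the aggregated shares $\{s'_i\}$, an explicit polynomial $f(x)$ over $R_q$ of degree at most $k-1$ whose constant term is $s$ and whose evaluations at the distinguished points $\{x_i\}$ equal $\{s'_i\}$. Once this polynomial is produced, the pairs $\{(x_i, s'_i)\}$ automatically fit the template~\eqref{shamir_construction}, certifying them as $(N,k)$-Shamir shares of $s$ and reducing the problem to routine linear algebra.

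First I would name the per-user polynomials already implicit in the setup: for each $i$, let $f_i(x) = s_i + \sum_{l=1}^{k-1} t_{i,l}\, x^l$, so that by step 2 of the setup phase the share sent from $i$ to $j$ is $s_{i,j} = f_i(x_j)$. Next I would define the aggregate
\begin{equation*}
f(x) \;=\; \sum_{j=0}^{N-1} f_j(x) \;=\; \Bigl(\sum_j s_j\Bigr) \;+\; \sum_{l=1}^{k-1}\Bigl(\sum_j t_{j,l}\Bigr) x^l,
\end{equation*}
which is a polynomial of degree at most $k-1$ whose constant term is precisely $s$. The key algebraic step is then a swap of the order of summation in~\eqref{secret_shares_setup}: $s'_i = \sum_j s_{j,i} = \sum_j f_j(x_i) = f(x_i)$, identifying each $(x_i, s'_i)$ as the value of the single global polynomial $f$ at the evaluation point $x_i$.

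The only substantive check is that the aggregated non-constant coefficients $\sum_j t_{j,l}$ are uniformly distributed on $R_q$, so that the joint law of $f$ coincides with that of a freshly drawn Shamir random polynomial for $s$. This is the one place worth flagging as a potential obstacle: the standard Shamir statement is over the prime field $\mathbb{Z}_p$, while the scheme here is applied coefficient-wise to polynomials in $R_q$. The fix is routine — each coefficient of $f$ is a sum of independent uniform elements in $\mathbb{Z}_q$, and such a sum is itself uniform whenever at least one summand is. Combining this with Remark~\ref{additivie_property_shamir} and Theorem~\ref{independence_shamir_shares} then shows that the distribution of $\{s'_i\}$ is identical to that of a genuine $(N,k)$-sharing of $s$, completing the proof.
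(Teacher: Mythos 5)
Your proposal is correct and follows essentially the same route as the paper: define the per-user polynomials $f_i$, sum them into a global $f$ with constant term $s$, and swap the order of summation to get $s'_i = f(x_i)$. Your added check that the aggregated coefficients $\sum_j t_{j,l}$ remain uniform is a small but welcome strengthening that the paper's proof leaves implicit (it cites Theorem~\ref{independence_shamir_shares} instead).
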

\begin{corollary}
    \label{secret_shares_independence}
    Any set of $k$  secret shares $s'_i$ generated according to expression~\eqref{secret_shares_setup} are mutually independent and uniformly distributed over $R_q$.
\end{corollary}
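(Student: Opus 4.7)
The plan is to reduce Corollary~\ref{secret_shares_independence} directly to Theorem~\ref{independence_shamir_shares} by recognizing the aggregated shares $\{s'_i\}$ as standard Shamir shares of a single polynomial whose non-constant coefficients happen to be uniformly distributed on $R_q$.

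First, repeating the computation that already appears inside the proof of Proposition~\ref{secret_shares_addition}, I would write each aggregated share in the closed form
\begin{equation*}
s'_i \;=\; f(x_i), \qquad f(x) \;=\; \sum_{j=0}^{N-1} f_j(x) \;=\; s \;+\; \sum_{l=1}^{k-1} u_l\, x^l,
\end{equation*}
where $s = \sum_j s_j$ and $u_l = \sum_j t_{j,l}$. This exhibits $\{s'_i\}$ as the evaluations of a single degree-$(k-1)$ polynomial at $N$ distinct points, placing them exactly within the framework of~\eqref{shamir_construction} for an $(N,k)$ Shamir scheme.

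Next I would verify the uniform-and-independent hypothesis of Theorem~\ref{independence_shamir_shares} for the aggregated non-constant coefficients $\{u_l\}_{l=1}^{k-1}$. Each $t_{j,l}$ is drawn independently and uniformly from $R_q$; the sum of independent uniforms on the additive group of $R_q$ is again uniform, so each $u_l$ is uniform on $R_q$; and independence of the $u_l$ across $l$ is inherited from the fact that they are sums over disjoint families of independent samples. Hence $f$ has exactly the canonical Shamir form with a uniform-random non-constant part, so Theorem~\ref{independence_shamir_shares} applies to the polynomial $f$ and transfers the conclusion to the aggregated shares.

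The main obstacle I anticipate lies in justifying the underlying Vandermonde bijection in the ring setting: unlike the scalar case over a prime field $\mathbb{Z}_p$, the ambient ring $R_q = \mathbb{Z}_q[X]/(X^n+1)$ is not an integral domain, so one needs the pairwise differences $x_i - x_j$ of the broadcast evaluation points to be \emph{units} in $R_q$ (not merely nonzero) in order for the Vandermonde matrix associated with any $k$ points to be invertible. Once the evaluation points are chosen with this property, the map from the coefficient vector of $f$ to any $k$ of its evaluations is an $R_q$-linear bijection, which, combined with the uniformity and independence of $(u_1, \ldots, u_{k-1})$ established above, transports the statement about the $u_l$'s into the claimed joint independence and uniformity of any $k$ aggregated shares on $R_q$.
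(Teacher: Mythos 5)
Your first two steps reproduce what the paper implicitly does: Proposition~\ref{secret_shares_addition} exhibits $s'_i = f(x_i)$ for $f(x) = s + \sum_{l=1}^{k-1} u_l x^l$ with $u_l = \sum_j t_{j,l}$, each $u_l$ uniform on $R_q$ and the $u_l$ mutually independent, and Theorem~\ref{independence_shamir_shares} is then invoked. That chain delivers exactly what Theorem~\ref{independence_shamir_shares} states: any $k-1$ of the aggregated shares are mutually independent and uniform. The paper gives no proof of the corollary beyond this reduction, so up to that point you are on the paper's route.

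The gap is your final step, where you try to upgrade from $k-1$ to $k$ shares by pushing uniformity through the Vandermonde bijection. That bijection maps the full coefficient vector $(s, u_1, \ldots, u_{k-1})$ to the $k$ chosen evaluations, so joint uniformity of $k$ shares on $R_q^k$ is equivalent to joint uniformity of $(s, u_1, \ldots, u_{k-1})$ on $R_q^k$. But $s = \sum_i s_i$ with each $s_i \leftarrow R_3$, so every coefficient of $s$ lies in $[-N, N]$ --- a vanishing fraction of $\mathbb{Z}_q$ for a $60$-bit prime $q$ --- and $s$ is nowhere near uniform, nor is it independent of the event that the shares take particular values. Concretely, any $k$ shares satisfy $\sum_{i} r_{a_i} s'_{a_i} = s$ as in~\eqref{eq:shares}, so computing this linear combination and checking that its coefficients are small distinguishes the real $k$-tuple from uniform with probability essentially $1$: the claim you are trying to prove is false for $k$ shares, and only the $k-1$ version holds (which is also all that the proof of Theorem~\ref{security_decryption} uses, since the adversary controls at most $k-1$ parties and the last honest share is pinned down by the sum constraint rather than simulated as uniform). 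The mismatch between the corollary's ``$k$'' and Theorem~\ref{independence_shamir_shares}'s ``$k-1$'' is something to flag, not to bridge. A minor side point: the evaluation points live in $\mathbb{Z}_q$, not $R_q$, and the Vandermonde system acts coefficient-wise, so the invertibility condition you worry about is that pairwise differences of the $x_i$ be units in $\mathbb{Z}_q$ --- automatic since $q$ is prime --- rather than units in $R_q$.
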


\begin{remark}
Note that the secret-key $s$ has the following desired properties: (i) it is not known to any individual and  (ii) a collective knowledge required to utilize it is present among any $k$ of the clients.
\end{remark}

\begin{remark}
The above process requires communication between each pair of nodes. Thus the communication complexity scales as $N^2$, where $N$ is the number of nodes in the network. However, this phase is run only once and is independent of $T$.

\end{remark}

\subsection{Aggregation Phase}
\begin{algorithm}[h]
\caption{Aggregation}
\label{aggregation_algorithm}
\begin{algorithmic}
    \STATE \textbf{Inputs}: common public key $\boldsymbol{cpk}$, input vectors $\{ \boldsymbol{g_i^t} \}$, smudging noise distribution $\chi_{smg}$, number of rounds $T$
    \FOR{t<T}
    \STATE \textbf{Client $i$:} 
    \bindent
    \STATE $\boldsymbol{ct}_i^t = BFV.Encrypt(\boldsymbol{cpk}, \boldsymbol{g}^t_i)$
    \STATE transmit $\boldsymbol{ct}^t_i$ to the central node
    \eindent
        \STATE \textbf{Central node:} 
    \bindent
        \STATE $\boldsymbol{ct}^t=(c_0^t, c_1^t) = \sum_i \boldsymbol{ct}^t_i$
        \STATE broadcast $\boldsymbol{ct}^t$
        \STATE select $k$ active users $\{ a_i \}_{i=0}^{k-1}$
        \STATE transmit reconstruction coefficients $\{ r_{a_i}^t \}_{i=0}^{k-1}$
    \eindent
    \STATE \textbf{Client $a_i$} (any selected in iteration $t$):
    \bindent
        \STATE $e_{a_i}^t \gets \chi_{smg}$
        \STATE $h_{a_i}^t = r_{a_i}^t \cdot s'_{a_i} \cdot c_1^t + e_{a_i}^t$
        \STATE transmit $h_{a_i}^t$ to the central node
    \eindent
        \STATE \textbf{Central node:}
    \bindent
        \STATE $m' = [\lfloor \frac{p}{q} [c_0^t + \sum_{a_j} h_{a_j}^t ] \rceil]_p$
        \STATE $\bar{g}^t = Decode(m')$
    \eindent
    \ENDFOR
\end{algorithmic}
\end{algorithm}

At this phase, users compute and encrypt their inputs. Then a subset of the active users is selected by the central node to participate in the decryption. Selected users generate decryption shares which are aggregated by the central node to recover the message.
\subsubsection{Encryption}
At this step we assume that client $i$ computes its gradient $\boldsymbol{g}^t_i$ and encrypts it into $\boldsymbol{ct}^t_i$
\begin{equation}
    \label{eq:encrypt}
    \boldsymbol{ct}^t_i = BFV.Encrypt(\boldsymbol{cpk}, \boldsymbol{g}^t_i).
\end{equation}
Clients transmit their ciphertexts to the central node for aggregation. 
\begin{equation}
    \boldsymbol{ct}^t = \sum_i \boldsymbol{ct}^t_i,
\end{equation}
where the sum happens element-wise. 
The aggregation result, $\boldsymbol{ct}^t$, is then sent back to the users for decryption. 
\subsubsection{User Selection}
At this step, each client transmits their availability to the central node, non-responsive nodes are marked as inactive after a time-out period. Then, $k$ active users are arbitrarily selected, denoted by $\{a_i\}_{i=0}^{k-1}$. The central node computes and transmits the reconstruction coefficient of the selected users according to the scheme discussed at section \ref{shamir_scheme} denoted by $\{ r_{a_i}^t \}_{i=0}^{k-1}$. We know that the coefficients satisfy the following property:
\begin{equation}
    \sum_{i=0}^{k-1} s'_{a_i} r_{a_i}^t = s. \label{eq:shares}
\end{equation}
\subsubsection{Decryption}
\label{decryption_step}
This steps aims to decrypt $\boldsymbol{ct}^t$ to produce $\sum_i {\mathbf g_i^t}.$ Following, 
Prop.~\ref{additivi_property_BFV}, since $\boldsymbol{ct}^t$ encrypts the aggregated gradient under $\boldsymbol{cpk}$, we have that:
\begin{equation}
    \sum_i \boldsymbol{g}^t_i = BFV.Decrypt(s, \boldsymbol{ct}^t).
\end{equation}
We note that $s$ is not accessible to any individual in the network, therefore, $\boldsymbol{ct}^t$ is not decryptable in the current form. Instead, users generate decryption shares by decrypting $\boldsymbol{ct}^t$ using their local shares scaled by their reconstruction coefficient. We should note that we require to compute the following quantity to decrypt assuming that $\boldsymbol{ct}^t = (c_0^t, c_1^t)$
\begin{equation}
    c_0^t + s \cdot c_1^t =     c_0^t + \sum_{i=0}^{k-1} r_{a_i}^t s'_{a_i} c_1^t,
\end{equation}
where we have used~\eqref{eq:shares}.
As a result, it is sufficient for each selected client $a_j$ to only transmit $r_{a_j}^t s'_{a_j} c_1^t$. Following Section~\ref{MPHE}, additional noise is superimposed in order to ensure the security.
The process is as follows:
\begin{enumerate}
    \item Samples $e_{a_i}^t \gets \chi_{smg}$.
    \item Transmits $h_{a_i}^t = r_{a_i}^t s'_{a_i} c_1^t  + e_{a_i}^t$ to the central node.
\end{enumerate}
Central node then generates 
\begin{equation}
    \label{final_pt_recovery}
    m' = [\lfloor \frac{p}{q} [c_0^t + \sum_{a_j} h_{a_j}^t ] \rceil]_p.
\end{equation}
Finally the aggregated gradients are available through $Decode(m')$ as discussed before.
\begin{theorem}
\label{thm:correctness}
Let $m$ and $e_{\boldsymbol{ct}^t}$ be the message and ciphertext noise~\eqref{eq:ect} resulting from decryption of $\boldsymbol{ct}^t=(c_0^t, c_1^t)$ under secret key $s$ such that $c_0^t + sc_1^t = \Delta \cdot m + e_{\boldsymbol{ct}^t}$. Let $e_{smg}=\sum_{a_i} e_{a_i}^t$ be the sum of all smudging noise introduced in step (3) and $m'$ be the recovered plaintext according to~\eqref{final_pt_recovery}. The recovered plaintext is consistent such that $m'=m$ if the noise stays within the decryptable bounds such that $\norm{e_{\boldsymbol{ct}^t} + e_{smg}} < \frac{q}{2p}$.
\begin{proof}
    According to~\eqref{final_pt_recovery} we have 
    \begin{align}
        \label{decryption_proof}
        m' &= [\lfloor \frac{p}{q} [c_0^t + \sum_{a_i} h_{a_i}^t]\rceil]_p 
        = [\lfloor \frac{p}{q} [c_0^t + \sum_{a_i} r_{a_i}^t s'_{a_i}c_1^t + e_{a_i}^t]\rceil]_p \nonumber \\  
        &= [\lfloor \frac{p}{q} [c_0^t + sc_1^t + e_{smg}]\rceil]_p
        = [\lfloor \frac{p}{q} [\Delta \cdot m + e_{\boldsymbol{ct}^t} + e_{smg}]\rceil]_p.
\end{align}
Following the discussion in Section~\ref{BFV}, we have $m'=m$, if $\norm{e_{\boldsymbol{ct}^t} + e_{smg}} < \frac{q}{2p}$.
\end{proof}

\end{theorem}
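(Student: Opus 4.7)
The plan is to start from the expression for $m'$ in~\eqref{final_pt_recovery} and substitute step-by-step until we reduce the computation to a standard BFV decryption with noise $e_{\boldsymbol{ct}^t} + e_{smg}$, at which point the known BFV correctness bound finishes the argument.

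First I would substitute the definition of the decryption shares, $h_{a_i}^t = r_{a_i}^t s'_{a_i} c_1^t + e_{a_i}^t$, into the sum $\sum_{a_i} h_{a_i}^t$. Collecting like terms gives
\begin{equation}
\sum_{a_i} h_{a_i}^t = \Bigl(\sum_{i=0}^{k-1} r_{a_i}^t s'_{a_i}\Bigr) c_1^t + \sum_{i=0}^{k-1} e_{a_i}^t.
\end{equation}
The first bracket is exactly the Shamir reconstruction of the collective secret key from any $k$ shares. By Proposition~\ref{secret_shares_addition}, the shares $\{s'_{a_i}\}$ are valid $(N,k)$ Shamir shares of $s = \sum_i s_i$, and by Theorem~\ref{shamir_reconstruction} (applied with the evaluation points indexed by the active set $\{a_i\}$) the central node's chosen coefficients $\{r_{a_i}^t\}$ satisfy $\sum_{i=0}^{k-1} r_{a_i}^t s'_{a_i} = s$. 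Using the shorthand $e_{smg} = \sum_{i} e_{a_i}^t$, I obtain
\begin{equation}
c_0^t + \sum_{a_i} h_{a_i}^t = c_0^t + s \, c_1^t + e_{smg}.
\end{equation}

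Next I would invoke the hypothesis on the underlying ciphertext, $c_0^t + s c_1^t = \Delta \cdot m + e_{\boldsymbol{ct}^t}$, to rewrite the right-hand side as $\Delta \cdot m + e_{\boldsymbol{ct}^t} + e_{smg}$. Plugging back into~\eqref{final_pt_recovery} yields
\begin{equation}
m' = \Bigl[\bigl\lfloor \tfrac{p}{q}\bigl(\Delta \cdot m + e_{\boldsymbol{ct}^t} + e_{smg}\bigr)\bigr\rceil\Bigr]_p.
\end{equation}
This is exactly the scalar to which standard BFV decryption correctness applies: since $\Delta = \lfloor q/p \rfloor$, the quantity $\frac{p}{q}\Delta \cdot m$ rounds to $m$ whenever the accumulated noise satisfies $\norm{e_{\boldsymbol{ct}^t} + e_{smg}} < \tfrac{q}{2p}$, which is precisely the hypothesis of the theorem. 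Hence $m' = m$.

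The one potentially delicate point is the invocation of Theorem~\ref{shamir_reconstruction} at the polynomial (rather than scalar) level: the reconstruction identity must hold coefficient-wise in $R_q$ with the same Lagrange coefficients $r_{a_i}^t$ applied to every coefficient of $s'_{a_i}$. This follows because the Shamir construction in~\eqref{shamir_construction} is applied independently to each coefficient with a common set of evaluation points, so the same reconstruction coefficients work uniformly across coefficients. Everything else is a routine chain of substitutions, and the final step is a direct appeal to the BFV decryption bound discussed around~\eqref{BFV_correctness}.
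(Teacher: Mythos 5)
Your proof is correct and follows essentially the same route as the paper's: substitute the decryption shares into~\eqref{final_pt_recovery}, use the Shamir reconstruction identity $\sum_i r_{a_i}^t s'_{a_i} = s$ to collapse the sum to $c_0^t + s c_1^t + e_{smg}$, and then apply the standard BFV correctness bound. Your extra remarks justifying the polynomial-level reconstruction via Proposition~\ref{secret_shares_addition} and Theorem~\ref{shamir_reconstruction} are a slightly more explicit version of what the paper compresses into equation~\eqref{eq:shares}.
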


\begin{remark}
    Our protocol provides the flexibility to choose a new set of $k$ clients for aggregation in each training iteration. Notably, during the aggregation phase, the clients who contribute their model updates and the clients responsible for decryption can be distinct groups.
\end{remark}

\section{Analysis}
\label{section_analysis}
We  provide conditions that guarantee correctness of the algorithm. Then we discuss conditions for  security.
\subsection{Correctness Analysis}
In this section we provide bounds on the noise induced by the multiparty homomorphic encryption setup. As a result of the setup phase, each secret key $s_i$ is sampled from $R_3$, we know that $\norm{s} \leq N$. Let $\boldsymbol{ct_i}=(c_0, c_1)$ be a fresh ciphertext encrypting an input vector under collective public key $\boldsymbol{cpk}$ at user $i$. Decryption under the collective secret key outputs $c_0 + sc_1 = \Delta \cdot m + e_{\boldsymbol{ct}_i}$. Following~\eqref{BFV_correctness} and considering $\norm{s} \leq N$, we have $\norm{e_{\boldsymbol{ct}_i}} \leq B(2nN + 1)$. Therefore, the ciphertext noise scales linearly in $N$. 

Following the aggregation step of the algorithm we know $\boldsymbol{ct} = \sum \boldsymbol{ct}_i$. Therefore,
\begin{equation}
    \label{noise_bound_aggregate_ciphertext}
    \norm{e_{\boldsymbol{ct}}} \leq BN(2nN + 1).
\end{equation}

During decryption we have $\norm{e_{smg}} \leq kB_{smg}$. 
\begin{theorem} (Correctness)
    Let $N, k, q, t, n, B, B_{smg}$ be the system parameters representing number of clients, robustness threshold, size of ciphertext space, size of plaintext space,  order of encryption polynomial, bound on BFV noise and bound on the smudging noise. The algorithm correctly recovers the aggregated inputs if $BN(2nN+1)+kB_{smg} < \frac{q}{2p}$.
    \proof Immediately follows  Theorem~\ref{thm:correctness} and expression~\eqref{noise_bound_aggregate_ciphertext}.
\end{theorem}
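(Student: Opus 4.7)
The plan is to reduce the statement to a single application of Theorem~\ref{thm:correctness} by separately bounding the ciphertext noise and the aggregate smudging noise, then adding them via the triangle inequality for the infinity norm. Since the hypothesis is already stated in exactly the form $BN(2nN+1) + kB_{smg} < q/(2p)$, the proof is essentially a bookkeeping exercise over the quantities introduced in Sections~\ref{BFV} and~\ref{algorithm_definition}.

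First I would recall the bound~\eqref{noise_bound_aggregate_ciphertext}, namely $\norm{e_{\boldsymbol{ct}^t}} \leq BN(2nN+1)$. This in turn rests on the observation that because each $s_i \longleftarrow R_3$ has $\norm{s_i} \leq 1$, the collective secret key $s = \sum_i s_i$ satisfies $\norm{s} \leq N$, so that the per-ciphertext BFV estimate~\eqref{BFV_correctness} becomes $\norm{e_{\boldsymbol{ct}_i}} \leq B(2nN+1)$; summing over the $N$ contributors and using the triangle inequality yields the aggregated bound. I would state this step briefly rather than re-derive it, since it is already established in the text.

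Next I would bound the aggregate smudging noise $e_{smg} = \sum_{i=0}^{k-1} e_{a_i}^t$ introduced in the decryption step. Because each $e_{a_i}^t$ is drawn from $\chi_{smg}$ which is supported on $[-B_{smg}, B_{smg}]$, another application of the triangle inequality gives $\norm{e_{smg}} \leq kB_{smg}$. Combining the two bounds, $\norm{e_{\boldsymbol{ct}^t} + e_{smg}} \leq \norm{e_{\boldsymbol{ct}^t}} + \norm{e_{smg}} \leq BN(2nN+1) + kB_{smg}$, which by hypothesis is strictly less than $q/(2p)$. Theorem~\ref{thm:correctness} then guarantees $m' = m$, so $Decode(m')$ recovers the aggregated input $\bar{\boldsymbol{g}}^t = \sum_i \boldsymbol{g}_i^t$ as required.

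There is no real obstacle here; the substantive content has already been discharged in the earlier lemmata. The only point where one must be slightly careful is implicit in Theorem~\ref{thm:correctness} rather than in the present corollary: the decryption identity $c_0^t + s c_1^t = \Delta \cdot m + e_{\boldsymbol{ct}^t}$ depends on the Shamir reconstruction $\sum_i r_{a_i}^t s'_{a_i} = s$ holding in $R_q$ and not merely in $\mathbb{Z}_p$, which is valid since the masking polynomials $t_{i,l}$ were sampled from $R_q$ in the setup phase of Algorithm~\ref{key_generation_algorithm}. Once this is noted, the theorem drops out as an immediate corollary of the preceding noise accounting.
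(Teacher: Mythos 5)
Your proof is correct and follows essentially the same route as the paper: it combines the aggregate ciphertext noise bound $\norm{e_{\boldsymbol{ct}^t}} \leq BN(2nN+1)$ from~\eqref{noise_bound_aggregate_ciphertext} with the smudging bound $\norm{e_{smg}} \leq kB_{smg}$ via the triangle inequality and then invokes Theorem~\ref{thm:correctness}. Your additional remark that the Shamir reconstruction must hold in $R_q$ is a useful clarification but does not change the argument.
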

\subsection{Security Analysis}
Throughout, we let $\lambda$ denote the security parameter and $\mu(\lambda)$ denote a negligible function. 
\begin{definition}
    (Negligible Function) We say a function $\mu:\mathbb{N} \rightarrow \mathbb{R}$ is negligible if for any given polynomial $p(\cdot)$, there exists an integer $\rho_0$ such that $\mu(\rho) < \frac{1}{p(\rho)}$ for all $\rho > \rho_0$.
\end{definition}
\begin{definition}
Given a security parameter $\lambda$, we say two distribution ensembles $X$ and $Y$ are computationally indistinguishable, denoted by $X \myeq{Y}$, if for every polynomial time algorithm $D$ there exists negligible function $\mu(\cdot)$ such that $|Pr[D(X) = 1] - Pr[D(Y)=1]| < \mu(\lambda)$.
\end{definition}
\begin{lemma}
    \label{smudging_lemma}
    Let $B_1$ and $B_2$ be two positive integers. Let $e_1 \in [-B_1, B_1]$ be a fixed integer and $e_2 \gets [-B_2, B_2]$ selected uniformly at random. The distribution of $e_2$ is computationally indistinguishable from that of $e_2 + e_1$ if $\frac{B_1}{B_2} < \mu(\lambda)$.
    \proof Refer to \cite{10.1007/978-3-642-29011-4_29}.
\end{lemma}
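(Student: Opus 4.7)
The plan is to prove the statement by bounding the statistical (total variation) distance between the two distributions and observing that statistical indistinguishability immediately implies computational indistinguishability for any polynomial-time distinguisher $D$. So the main goal reduces to controlling $\Delta(e_2, e_2+e_1)$ and showing it inherits negligibility from $B_1/B_2$.

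First, I would write out the two distributions explicitly: $e_2$ is uniform on the $2B_2+1$ integers in $[-B_2, B_2]$, each point having probability $\frac{1}{2B_2+1}$, while $e_1 + e_2$ is uniform on the shifted interval $[e_1 - B_2, e_1 + B_2]$, again with per-point mass $\frac{1}{2B_2+1}$. The two supports overlap on $2B_2 + 1 - |e_1|$ points, on which the two pmfs agree exactly, and differ on exactly $2|e_1|$ points (the $|e_1|$ endpoints present only in the support of $e_2$ and the $|e_1|$ endpoints present only in the support of $e_1+e_2$). This immediately yields
\begin{equation}
    \Delta(e_2,\, e_2+e_1) \;=\; \frac{1}{2}\sum_x \bigl| \Pr[e_2 = x] - \Pr[e_2 + e_1 = x] \bigr| \;\leq\; \frac{|e_1|}{2B_2+1} \;\leq\; \frac{B_1}{2B_2+1}.
\end{equation}

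Next, I would invoke the hypothesis $B_1/B_2 < \mu(\lambda)$ to conclude that the bound above is itself a negligible function of $\lambda$, since any constant-factor rescaling of a negligible function remains negligible. Finally, for any (even unbounded) algorithm $D$, the standard fact $|\Pr[D(X)=1]-\Pr[D(Y)=1]| \leq \Delta(X,Y)$ gives $|\Pr[D(e_2)=1]-\Pr[D(e_2+e_1)=1]| < \mu'(\lambda)$ for a negligible $\mu'$, which is exactly the definition of $e_2 \myeq e_2 + e_1$.

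The main obstacle is essentially bookkeeping rather than anything conceptual: one must be careful that the claim holds uniformly over the \emph{fixed} choice of $e_1 \in [-B_1, B_1]$, so the bound must depend only on $B_1$ and not on the particular realization of $e_1$. The step $|e_1| \leq B_1$ handles exactly this, so the argument goes through for every admissible $e_1$ simultaneously. No assumption on the computational hardness of RLWE or any cryptographic primitive is needed here — the lemma is purely information-theoretic, and the conclusion $X \myeq Y$ is in fact the strictly weaker computational version of a statistical indistinguishability statement.
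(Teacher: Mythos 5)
Your proof is correct and is essentially the standard argument for the smudging lemma that the paper defers to via its citation: compute the total variation distance between the uniform distribution on $[-B_2,B_2]$ and its shift by $e_1$, bound it by $\frac{B_1}{2B_2+1}$ uniformly over the fixed $e_1$, and note that statistical indistinguishability implies the computational version. No gaps; the only cosmetic point is that on the overlapping support the bound is in fact an equality, which you correctly relax to an inequality.
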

We further analyze the security of the scheme in the passive semi-honest adversary model. We provide arguments based on the simulation formalism\cite{Lindell2017} defined as below.

\begin{proposition} 
(\textit{Simulation Formalism}) Let $y = f(x_0, ..., x_{N-1})$ be a deterministic functionality. We say a protocol securely realizes $f$ in the presence of a polynomial time static semi-honest adversary $\mathcal{A}$ if a simulator $S$ exists so that $S( \{x_i\}_{i \in \mathcal{A}}, y) \myeq V_{\mathcal{A}}$, where $V_{\mathcal{A}}$ is the real view of $\mathcal{A}$ representing the collection  of private inputs, messages received during the execution of the protocol and the internal state of variables.
\end{proposition}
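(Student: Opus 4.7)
The statement labeled ``Proposition'' is in fact a definition in disguise: the phrase ``We say a protocol securely realizes $f$ \dots if a simulator $S$ exists so that $S(\{x_i\}_{i\in\mathcal{A}}, y) \myeq V_{\mathcal{A}}$'' introduces a notion of security rather than asserting a mathematical fact to be derived. There is accordingly no theorem to prove, and my ``proof proposal'' can only be to verify that the definition is well-posed and to record what the actual work would be when the definition is later invoked.

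The steps I would take are therefore bookkeeping rather than mathematics. First, I would check that the view $V_{\mathcal{A}}$ is a well-defined random variable: once the protocol, the input tuple $(x_0,\ldots,x_{N-1})$, the adversary's random tape, and the honest parties' random tapes are fixed, $V_{\mathcal{A}}$ is the random variable collecting $\mathcal{A}$'s private inputs, its received transcript, and its internal state. Because $\myeq$ has been defined on distribution ensembles indexed by the security parameter $\lambda$, I would note explicitly that $V_{\mathcal{A}}$ and $S(\{x_i\}_{i\in\mathcal{A}}, y)$ must be read as ensembles $\{V_{\mathcal{A}}^{\lambda}\}_{\lambda}$ and $\{S^{\lambda}(\cdot)\}_{\lambda}$, and that $S$ is required to be a probabilistic polynomial-time algorithm. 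Second, I would confirm that the simulator's input consists of precisely $\{x_i\}_{i\in\mathcal{A}}$ together with the public output $y$ --- this is what formalizes the intuition that a corrupted set should learn nothing beyond its own inputs and the declared functionality output, which is the whole content of the simulation paradigm for a deterministic $f$ against a static semi-honest $\mathcal{A}$.

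The main obstacle, and the only one that involves genuine technical work, does not belong to the proof of this proposition at all but to its subsequent application. To show that the RSA protocol of Algorithms~\ref{key_generation_algorithm} and~\ref{aggregation_algorithm} meets the definition, one must exhibit a concrete simulator whose output is computationally indistinguishable from $V_{\mathcal{A}}$ for every adversary controlling up to $k-1$ clients (or $k-2$ clients together with the central server), relying on the hardness of decision-RLWE from Definition~\ref{RLWE_definition}, the uniformity of any $k-1$ Shamir shares from Corollary~\ref{secret_shares_independence}, and the smudging estimate of Lemma~\ref{smudging_lemma}. Constructing that simulator and arguing the hybrid indistinguishability is the real technical burden, but it lies downstream of the present statement, which --- being a definition --- admits no proof.
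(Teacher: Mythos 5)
You are right: the paper itself provides no proof for this ``Proposition'' --- it is stated as the definitional framework (the standard semi-honest simulation paradigm) and the genuine technical work is deferred to Theorems~\ref{security_setup} and~\ref{security_decryption}, exactly as you observe. Your reading matches the paper's treatment, and your remarks on well-posedness and on where the real burden lies are accurate.
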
 
Let $\mathcal{P}$ denote all the participants in the algorithm. For any possible adversary $\mathcal{A}$, defined as a subset of at most $k-1$ participants, we construct a simulator program $S(\cdot)$ so that given the inputs and outputs of the adversary, can simulate $\mathcal{A}$'s view in the algorithm. For a given value $x$, we denote the simulated equivalent by $\Tilde{x}$. We examine the security of the scheme by analyzing setup phase and decryption step.

Let $\mathcal{P}_a$ denote the set of active users at each stage. At any point of the algorithm, there exists at least one honest player denoted by $P_z$. Let $\mathcal{H}= \mathcal{P}_a \setminus ( \mathcal{A} \cup \{P_z\})$ denote the rest of the honest participants. Now we provide the security arguments for both phases of the algorithm.
\subsubsection{Setup Phase}
We consider an adversary $\mathcal{A}$ attacking the setup phase. We consider $s_i$ and $e_i$ as the private inputs for node $i$. Recall that these are used by node $i$ to generate $p_{0,i} = -(p_1 s_i + e_i)$ as in Section~\ref{key_generation_algorithm}. The output $p_0 = \sum_i p_{0,i}$.
Thus, the ideal functionality of this step is defined as 
\begin{equation}
    f_{SET} (p_1, \{ s_i, \{t_{i,l}\}_{l=1}^{k-1} ,e_i \}_{i=0}^{N-1}) = \left\{p_0, (s'_0, ..., s'_{N-1})\right\},
\end{equation}
where each output $s'_i$ is private to client $i$. 
\begin{theorem}
    \label{security_setup}
    The protocol introduced in section \ref{setup_phase_our} securely realizes $f_{SET}$ in the presence of a polynomial time static semi-honest adversary controlling up to $k-1$ parties.
    \begin{proof}
    Without loss of generality, we assume the adversary controls parties $\{0, ..., k-1\}$. We define the real view of $\mathcal{A}$ by including the common public polynomial $p_1$, the key generation shares $(p_{0, i})_{i \in {\mathcal P}}$, the secret shares generated by adversary nodes $(s_{i,j})_{i=0...k-1, j=0...N-1}$ and the secret shares received by the adversary nodes $(s_{i,j})_{i=k...N-1, j=0...k-1}$. We introduce the simulator $S(\cdot)$ as follows:  
    \begin{itemize}
    \item For adversary nodes the simulated values are set to be equal to the real values. For $i \in \mathcal{A}$ we set $\Tilde{p}_{0, i} = p_{0, i}$ and $\Tilde{s}_{i,j} = s_{i,j}$ for $j \in P$.
    \item For $i \in \mathcal{H}$ we set $\Tilde{p}_{0, i} \gets R_q$ and $\Tilde{s}_{i, j} \gets R_q$ for $j \in \mathcal{A}$.
    \item For $P_z$, we set $\Tilde{p}_{0, z} = p_0 - \sum_{P_j \in \mathcal{A} \cup \mathcal{H}} \Tilde{p}_{0, j}$, $\Tilde{s}_{z, j} = s'_j - \sum_{i \in \mathcal{A} \cup \mathcal{H}} \Tilde{s}_{i, j}$ for $j \in \mathcal{A} $.
    \end{itemize}
    Now we show that the simulated view is computationally indistinguishable from the real view. We note that given $p_1$, all the $p_{0,i}$ are mutually independent due to the independence of $s_i$ and $e_i$.
    Furthermore following Theorem \ref{independence_shamir_shares}  the secret-shares in the view are also independent. Therefore, it is sufficient to show the indistinguishability of each element separately. 
    
    Based on the hardness of decisional-RLWE problem (see Definition~\ref{RLWE_definition}), the simulated key generation shares are indistinguishable from the real ones such that $p_{0, i} \myeq \Tilde{p}_{0, i}$ for $i \in \mathcal{H}$. For the share of player $P_z$, we have
    \begin{align}
        \label{independence_sum}
        \Tilde{p}_{0, z} &= p_0 - \sum_{P_j \in \mathcal{A}}  p_{0, j} - \sum_{P_j \in \mathcal{H}}  \Tilde{p}_{0, j} \nonumber \\
        &\myeq  p_0 - \sum_{P_j \in \mathcal{A}}  p_{0, j} - \sum_{P_j \in \mathcal{H}}  p_{0, j} 
         \myeq p_{0, z}.
    \end{align}
  The relation $\Tilde{s}_{i,j} \myeq s_{i, j}$ directly follows Theorem \ref{independence_shamir_shares} for $i \in \mathcal{H}$. Furthermore, $\Tilde{s}_{z, j} \myeq s_{z, j}$ follows the same argument as ~\eqref{independence_sum}. This completes the proof. 
  \end{proof}
\end{theorem}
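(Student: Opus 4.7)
My plan is to apply the simulation formalism: build a polynomial-time simulator $S$ that, given the adversary's private inputs $\{s_i, \{t_{i,l}\}_{l=1}^{k-1}, e_i\}_{i \in \mathcal{A}}$ together with the portion of the outputs the adversary is entitled to (namely $p_0$ and its own final shares $\{s'_i\}_{i \in \mathcal{A}}$), produces a view computationally indistinguishable from the real view $V_{\mathcal{A}}$. After relabelling so that $\mathcal{A}=\{0,\ldots,k-1\}$, I would designate one honest ``pivot'' node $P_z$ and let $\mathcal{H}$ be the remaining honest nodes. The view to match then consists of $p_1$, the broadcast shares $\{p_{0,i}\}_{i=0}^{N-1}$, the shares the adversary itself generated $\{s_{i,j}\}_{i \in \mathcal{A},\, j}$, and the shares the adversary received from honest nodes $\{s_{i,j}\}_{i \notin \mathcal{A},\, j \in \mathcal{A}}$.

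The simulator I would construct behaves as follows. For each adversary-controlled index it copies the real values, which it can compute directly from the private inputs it is handed. For each $i \in \mathcal{H}$ it draws $\tilde{p}_{0,i} \longleftarrow R_q$ and $\tilde{s}_{i,j} \longleftarrow R_q$ for every $j \in \mathcal{A}$, independently and uniformly. Finally, it pins the pivot's values to enforce consistency with the ideal-functionality outputs, setting $\tilde{p}_{0,z} = p_0 - \sum_{j \neq z} \tilde{p}_{0,j}$ and $\tilde{s}_{z,j} = s'_j - \sum_{i \neq z} \tilde{s}_{i,j}$ for every $j \in \mathcal{A}$.

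To argue indistinguishability I would first establish that, conditioned on $p_1$, the elements of the real view split into mutually independent blocks: the $\{p_{0,i}\}$ are independent because the underlying randomness $\{s_i, e_i\}$ is independent across nodes, and any collection of at most $k-1$ of the shares $\{s_{i,j}\}_j$ issued by a fixed honest $i$ is uniform over $R_q$ and mutually independent by Theorem \ref{independence_shamir_shares}. This independence lets me argue block by block: for $i \in \mathcal{H}$ the real $p_{0,i} = -(p_1 s_i + e_i)$ is pseudorandom against any polynomial-time distinguisher by the hardness of decision-RLWE (Definition \ref{RLWE_definition}), giving $p_{0,i} \myeq \tilde{p}_{0,i}$, while for the honest shares sent to the adversary Theorem \ref{independence_shamir_shares} yields $s_{i,j} \myeq \tilde{s}_{i,j}$ directly.

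The main obstacle is the pivot $P_z$, whose simulated values are deterministic functions of the rest of the simulated view while its real values are freshly sampled. I would handle this with a standard hybrid: start from the real view and, one at a time, swap each $p_{0,i}$ with $i \in \mathcal{H}$ for a uniform sample (valid by RLWE) and each $s_{i,j}$ with $i \in \mathcal{H},\, j \in \mathcal{A}$ for a uniform sample (valid by Theorem \ref{independence_shamir_shares}), while throughout maintaining the aggregation identities $p_0 = \sum_i p_{0,i}$ and $s'_j = \sum_i s_{i,j}$ as constraints that determine the pivot's entries. These identities hold in the real view by construction and are preserved in every hybrid, so at the end of the chain the pivot's values coincide with the simulator's definitions, establishing $S(\cdot) \myeq V_{\mathcal{A}}$.
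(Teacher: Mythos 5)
Your proposal is correct and follows essentially the same route as the paper: the same simulator (copy adversarial values, sample uniform values for $\mathcal{H}$, pin the pivot $P_z$ to enforce consistency with $p_0$ and $\{s'_j\}_{j\in\mathcal{A}}$), the same independence decomposition, and the same appeals to decision-RLWE and Theorem~\ref{independence_shamir_shares}. Your explicit hybrid chain for the pivot is just a more detailed rendering of the substitution the paper performs in~\eqref{independence_sum}.
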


\subsubsection{Aggregation}
In this section we provide the security argument for the decryption step of the proposed scheme. Initially, we discuss one iteration of the algorithm and drop the dependence on the iteration index $t$. Assume that users hold ciphertext $\boldsymbol{ct}= (c_0, c_1)$ that is encrypting plaintext $m$ as we have defined in Eq.~\eqref{BFV_encryption}.
  
We consider $s'_i$ and $e_i$ as the private inputs of each party and denote the ideal functionality of this step as following
\begin{equation}
    f_{DEC}(\boldsymbol{ct}, \{ s'_i, e_i \}_{i \in {\mathcal P}_a}) = m.
\end{equation}
\begin{lemma}
    \label{RLWE_composition_lemma}
    Let $s \in R_q$ be a random element and $\chi$ a noise distribution. Furthermore let $u \gets R_q$ be a polynomial selected uniformly at random. Let $\Tilde{A}_{s, \chi}^q$ be the distribution defined by choosing $a \in R_q$ to be a random polynomial such that $a \myeq u$ and outputting $(a, as+e)$. We have $\Tilde{A}_{s, \chi}^q \myeq U(R_q^2)$, where $e \gets \chi$ is a randomly sampled from the noise distribution.
    \begin{proof}
        Based on Definition \ref{RLWE_definition}, $(a, as+e) \myeq (u, us+e) \myeq U(R_q^2)$.
    \end{proof}
\end{lemma}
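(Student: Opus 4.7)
The plan is to prove this via a standard hybrid/composition argument together with a reduction to the hardness of decisional RLWE (Definition~\ref{RLWE_definition}) and to the assumption $a \myeq u$.

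First I would introduce the natural intermediate (hybrid) distribution
\begin{equation}
H := (u, us+e), \qquad u \longleftarrow R_q,\ e \longleftarrow \chi,
\end{equation}
which is exactly the RLWE distribution $A_{s,\chi}^q$ of Definition~\ref{RLWE_definition}. The goal is then to establish the chain
\begin{equation}
\Tilde{A}_{s,\chi}^q \;=\; (a, as+e) \;\myeq\; H \;=\; (u, us+e) \;\myeq\; U(R_q^2),
\end{equation}
and conclude by transitivity of computational indistinguishability.

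Next I would prove the right-hand equivalence $H \myeq U(R_q^2)$. This is immediate from Definition~\ref{RLWE_definition}: the decisional RLWE assumption asserts precisely that $(u, us+e)$ with $u \gets R_q$ and $e \gets \chi$ is computationally indistinguishable from a uniform pair over $R_q^2$, so any polynomial-time distinguisher between $H$ and $U(R_q^2)$ would contradict the hardness of decisional-RLWE.

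The main step, and the only one requiring a reduction, is the left-hand equivalence $(a, as+e) \myeq (u, us+e)$. I would prove this by a direct reduction to the assumed indistinguishability $a \myeq u$. Suppose toward a contradiction that there is a polynomial-time distinguisher $D$ with non-negligible advantage between $(a, as+e)$ and $(u, us+e)$. I would build a distinguisher $D'$ for the single-coordinate problem as follows: on input $x \in R_q$ (drawn either from the distribution of $a$ or from the uniform distribution on $R_q$), sample $s \gets R_q$ and $e \gets \chi$ internally, compute $y = xs+e$, and output $D(x,y)$. Because sampling $s$ and $e$ and performing one ring multiplication and addition can be done in polynomial time, $D'$ is a valid polynomial-time adversary; and by construction $D'$ inherits the full advantage of $D$, contradicting $a \myeq u$. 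Hence $(a, as+e) \myeq (u, us+e)$.

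The main obstacle here is purely conceptual rather than technical: one must be careful that the reduction is efficient and that the distribution of $(s,e)$ used inside $D'$ matches exactly the distribution used to define both $\Tilde{A}_{s,\chi}^q$ and $H$, so that the two input cases of $D'$ perfectly simulate the two input cases of $D$. Once this is handled, combining the two indistinguishabilities with the transitivity of $\myeq$ (which follows immediately from the triangle inequality on distinguishing advantages and the fact that the sum of two negligible functions is negligible) yields $\Tilde{A}_{s,\chi}^q \myeq U(R_q^2)$, as required.
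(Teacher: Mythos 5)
Your proof is correct and follows essentially the same route as the paper: the paper's own (one-line) argument is exactly the chain $(a,as+e) \myeq (u,us+e) \myeq U(R_q^2)$, where the second step is the decisional-RLWE assumption and the first is the composition with $a \myeq u$; you have simply filled in the hybrid, the reduction, and the appeal to transitivity explicitly. No gaps.
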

\begin{theorem}
    \label{security_decryption}
    Given any fixed iteration $t$, the protocol proposed at section \ref{decryption_step} securely realizes $f_{DEC}$ in the presence of a polynomial time static semi-honest adversary controlling up to $k-1$ parties as long as $\frac{BN(2nN+1)}{kB_{smg}} < \mu(\lambda)$.
\end{theorem}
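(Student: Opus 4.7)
The plan is to construct a simulator $S$ that, on input $\{s'_i, e_i\}_{i \in \mathcal{A}}$ and the plaintext $m$, produces a transcript computationally indistinguishable from the adversary's real view. For one iteration that view consists of the public ciphertext $\boldsymbol{ct} = (c_0, c_1)$, the reconstruction coefficients $\{r_{a_i}\}$, and the $k$ broadcast decryption shares $\{h_{a_i}\}_{a_i \in \mathcal{P}_a}$. Since $|\mathcal{A}| \leq k-1$ and $|\mathcal{P}_a| = k$, at least one honest active party exists; I would designate one such party as $P_z$ and write $\mathcal{H} = \mathcal{P}_a \setminus (\mathcal{A} \cup \{P_z\})$.

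The simulator produces the transcript as follows. For $a_i \in \mathcal{A}$ it sets $\tilde{h}_{a_i}$ to the value the adversary would itself compute from its own $s'_{a_i}$ and $e_{a_i}$. For $a_i \in \mathcal{H}$ it samples fresh $\tilde{s}'_{a_i} \longleftarrow R_q$ and $\tilde{e}_{a_i} \longleftarrow \chi_{smg}$ and sets $\tilde{h}_{a_i} = r_{a_i} \tilde{s}'_{a_i} c_1 + \tilde{e}_{a_i}$. For the designated party $P_z$ it samples $\tilde{e} \longleftarrow \chi_{smg}$ and sets $\tilde{h}_{a_z} = \Delta \cdot m + \tilde{e} - c_0 - \sum_{a_i \neq a_z} \tilde{h}_{a_i}$, which forces the simulated aggregate to decrypt to $m$.

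Indistinguishability then splits into two claims. First, Corollary~\ref{secret_shares_independence} guarantees that conditioned on the at most $k-1$ shares the adversary already holds from the setup, the honest shares $\{s'_{a_i}\}_{a_i \in \mathcal{H}}$ are mutually independent and uniform on $R_q$, so replacing each $s'_{a_i}$ by a fresh $\tilde{s}'_{a_i}$ leaves the joint marginal of these $\tilde{h}_{a_i}$ identically distributed from the adversary's viewpoint. Second, the real share of $P_z$ satisfies $h_{a_z} = \Delta m + e_{\boldsymbol{ct}} + \sum_{a_i \in \mathcal{P}_a} e_{a_i} - c_0 - \sum_{a_i \neq a_z} h_{a_i}$, in which the adversary already knows $\{e_{a_i}\}_{a_i \in \mathcal{A}}$, so the only hidden masking term is $e_{\boldsymbol{ct}} + \sum_{a_i \notin \mathcal{A}} e_{a_i}$. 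By Lemma~\ref{smudging_lemma}, using $\norm{e_{\boldsymbol{ct}}} \leq BN(2nN+1)$ from~\eqref{noise_bound_aggregate_ciphertext} together with the hypothesis $BN(2nN+1)/(kB_{smg}) < \mu(\lambda)$, this composite noise is computationally indistinguishable from the fresh $\tilde{e} \longleftarrow \chi_{smg}$ used by the simulator, which completes the single-iteration argument.

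The main obstacle I anticipate is extending this single-iteration argument to the full $T$-round protocol: the Shamir shares $\{s'_i\}$ are generated once in the setup and reused across all iterations, so naively resampling $\tilde{s}'_{a_i}$ independently in each simulated round must be justified as indistinguishable from the real execution, where the same physical shares reappear. A standard hybrid argument over iterations should close this gap --- the shares only enter the transcript through the linear combination $r_{a_i} s'_{a_i} c_1 + e_{a_i}$, and since each iteration uses a freshly sampled smudging noise $e_{a_i}$ and fresh encryption randomness $u, e_0, e_1$, the smudging bound can be invoked round by round, preventing the adversary from cross-correlating observations across iterations to extract any $s'_{a_i}$.
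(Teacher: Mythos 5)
Your simulator is structurally the same as the paper's: adversarial shares are copied, the honest parties in $\mathcal{H}$ get freshly randomized shares, $P_z$'s share is forced so that the aggregate decrypts to $m$ under a fresh smudging-type noise, and the smudging lemma is invoked with exactly the bound in the hypothesis. The one substantive divergence is how you justify the $\mathcal{H}$ block, and that is where there is a genuine gap. You argue information-theoretically: conditioned on the adversary's at most $k-1$ shares, the honest active shares are mutually independent and uniform, so resampling each $s'_{a_i}$ is distribution-preserving. This only holds when the shares the adversary can condition on, together with the honest active shares being resampled, number fewer than the Shamir threshold. But the adversary is a static set of up to $k-1$ parties that need not coincide with the $k$ selected decryptors $\mathcal{P}_a$: if even one corrupted party is inactive in round $t$, the adversary privately knows $k-1$ shares while two or more honest parties are active, so at least $k$ evaluations of the degree-$(k-1)$ polynomial are in play and the honest active shares become perfectly correlated given the adversary's inputs. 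Independent resampling is then not identically distributed, and your first claim fails in precisely the regime the theorem must cover.

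The paper sidesteps this by never reasoning about the shares themselves: it sets $\tilde{h}_i \gets R_q$ uniformly and shows that $h_{a_i} = (r_{a_i} c_1)\, s'_{a_i} + e_{a_i}$ is computationally indistinguishable from uniform because it is an RLWE sample with secret $s'_{a_i}$ and pseudorandom first component $r_{a_i} c_1$ (Definition~\ref{RLWE_definition} together with Lemma~\ref{RLWE_composition_lemma}). Your proof never invokes RLWE hardness for the decryption shares, and that is the missing idea; with it, the corner case above disappears and the remainder of your argument (the forced $P_z$ share and the smudging step) matches the paper's. Your closing concern about reusing shares across $T$ rounds is outside the scope of this fixed-$t$ statement, but it is a real issue, and the paper's subsequent theorem resolves it not by a round-by-round smudging hybrid but again via Lemma~\ref{RLWE_composition_lemma}, viewing $\{(r_z^t c_1^t, h_z^t)\}_t$ as multiple RLWE samples under the fixed secret $s'_z$.
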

\begin{proof}
    Assume that $\mathcal{P}_t = \left\{a_0, \ldots a_{k-1}\right\}$ denotes the set of active users.
    We define the real view of the adversary as $c_1$ and $(h_{a_0}, ..., h_{a_{k-1}})$ (we do not mention the superscript as we only consider one iteration), the decryption shares generated by the users, recall that $h_{a_i} = r_{a_i} s'_{a_i} c_1 + e_{a_i}$ and $h = \sum_j h_{a_j}$ satisfies $c_0 + h = \Delta \cdot m + e_{\boldsymbol{ct}} + e_{smg}$,
    where $e_{\boldsymbol{ct}}$ is the decryption noise defined as \eqref{eq:ect} and $e_{smg}$ is the summation of all smudging noise. 
    We construct $S(\boldsymbol{ct}, \{ s'_i, e_i \}_{i \in \mathcal{A}},m)$ to simulate the real view. The simulated shares need to produce a consistent output. Therefore, the simulator begins by generating $\Tilde{h} = \Delta \cdot m + \Tilde{e}_{smg} - c_0$, where $\Tilde{e}_{smg}$ is distributed according to $\mathcal{D}$, a uniform distribution over $[-kB_{smg}, kB_{smg}]$. The simulated shares are set to be equal to the real shares for adversary nodes. Furthermore, for $P_i \in \mathcal{H}$, the simulated shares are selected uniformly at random $\Tilde{h}_{i} \gets R_q$. The share of player $P_z$ is generated as $\Tilde{h}_{z} = \Tilde{h} - \sum_{P_j \in \mathcal{A} \cup \mathcal{H}} \Tilde{h}_{j}$. 
    
    To prove that the simulated view is computationally indistinguishable from the real view, note that due to independence of noise terms $e_{a_i}$ as well as the result of Corollary~\ref{secret_shares_independence}, the variables $h_{a_i}$ are mutually independent given $c_1$. It is thus sufficient to show the computational indistinguishibility of elements of the view separately. 
    According to the smudging lemma (see \ref{smudging_lemma}) if $\frac{BN(2nN+1)}{kB_{smg}} < \mu(\lambda)$ we have $e_{\boldsymbol{ct}} + e_{smg} \myeq \Tilde{e}_{smg}$ 
    and therefore, $\Tilde{h} \myeq h$. 
    Let $u \gets R_q$ be a random polynomial selected uniformly at random. Based on Definition \ref{RLWE_definition} and~\eqref{BFV_encryption},  we know $c_1 \myeq u$ and thus for any fixed $r_{a_i}$, it follows that $r_{a_i}c_1 \myeq u$. Therefore, following Lemma~\ref{RLWE_composition_lemma}, we have $h_{a_i} \myeq u$. The case for $P_z$ follows as below:
    \begin{align}
        \Tilde{h}_z &= 
        \Tilde{h} - \sum_{P_j \in \mathcal{A} \cup \mathcal{H}} \Tilde{h}_{j} \myeq h - \sum_{P_j \in \mathcal{A} \cup \mathcal{H}} h_j \myeq h_z.
    \end{align}
    This completes the proof.
\end{proof}

Theorem \ref{security_decryption} provides the security argument for one iteration of the algorithm. Next theorem establishes the analysis for executing the aggregation for $T$ consecutive steps.
\begin{theorem}
    Given the conditions of theorem \ref{security_decryption}, a learning process consisting of $T$ iterations is computationally secure against a polynomial time static semi-honest adversary controlling a fixed set of users with size of at most $k-1$.
    \begin{proof}
        Let $\mathcal{P}_t$ denote the set of active users at iteration $t$. We define the real view of the adversary as $p_1$ and $(h_i^t)_{t=0...T-1, i \in \mathcal{P}_t}$ where $h_i^t = r_i^t s'_i c_1^t + e_i^t$. Theorem \ref{security_decryption} provides the security argument for each iteration independently. Let $P_z$ denote an arbitrary honest node. 
        Let $T_z \in \{0...T-1\}$ denote the time steps where the user is active in decryption. The view of the adversary regarding this user, during the execution of the learning process is defined as $\{h_z^t\}_{t \in T_z}$, where $h_z^t=s'_z r_z^t c_1^t + e_z^t$. We prove security by showing that this view is computationally indistinguishable from a uniform distribution. Given $p_1$, the elements of $\{r_z^t c_1^t\}_{t \in T_z}$ are mutually independent (see \eqref{BFV_encryption}). Furthermore, we know that $r_z^t c_1^t \myeq u$ where $u \gets R_q$ (refer to proof of Theorem \ref{security_decryption}). Therefore, $\{(r_z^t c_1^t, h_z^t)\}_{t \in T_z}$ is a set of $|T_z|$ samples generated according to $\Tilde{A}_{s'_z, \chi}^q$ (note that $s'_z$ is generated in the setup phase and fixed throughout) and based on Lemma \ref{RLWE_composition_lemma}, computationally indistinguishable from a uniform distribution. 

    \end{proof}
\end{theorem}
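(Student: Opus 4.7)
The plan is to upgrade the single iteration argument of Theorem \ref{security_decryption} to the multi-iteration setting by focusing on one arbitrary honest user at a time and showing that the entire transcript associated with that user over the $T$ rounds is indistinguishable from uniform, then combining these across the honest users. The delicate point compared to Theorem \ref{security_decryption} is that the secret share $s'_z$ of an honest party $P_z$ is generated once in the setup phase and reused in every iteration in which $P_z$ is selected for decryption, so we cannot simply invoke per-round security as a black box: the messages $h_z^t$ over $t$ are correlated through the common secret $s'_z$.

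First I would fix an adversary set $\mathcal{A}$ of size at most $k-1$ and, following the simulation formalism, define the real view as the common random polynomial $p_1$ together with the tuples $\{ h_i^t \}_{t < T,\ i \in \mathcal{P}_t}$. By the setup phase analysis (Theorem \ref{security_setup}) the public key portion of the view is already indistinguishable from uniform, so the remaining task is the decryption shares. Fix an honest $P_z \notin \mathcal{A}$ and let $T_z \subseteq \{0,\dots,T-1\}$ be the rounds in which $P_z$ contributes to decryption. The sub-view associated with $P_z$ is the sequence $\{(r_z^t c_1^t, h_z^t)\}_{t \in T_z}$ with $h_z^t = s'_z (r_z^t c_1^t) + e_z^t$ and $e_z^t \gets \chi_{smg}$ independent across $t$.

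Next I would argue mutual independence of the ``public parts'' $\{r_z^t c_1^t\}_{t\in T_z}$: the ciphertexts in distinct rounds are produced from independent BFV encryption randomness $u^t, e_0^t, e_1^t$ at every client, and the reconstruction scalars $r_z^t$ are publicly determined. As in the proof of Theorem \ref{security_decryption}, each $c_1^t$ (and hence $r_z^t c_1^t$) is individually computationally indistinguishable from a uniform $u \gets R_q$ by decision-RLWE. Thus $\{(r_z^t c_1^t, h_z^t)\}_{t \in T_z}$ is statistically close, via the smudging Lemma \ref{smudging_lemma} applied in each round under the hypothesis $BN(2nN+1)/(kB_{smg})<\mu(\lambda)$, to a family of $|T_z|$ samples from $\tilde A_{s'_z,\chi_{smg}}^q$ with a common secret $s'_z$ and fresh $a$-parts. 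By iterating Lemma \ref{RLWE_composition_lemma} in a standard polynomial hybrid (replacing one sample at a time by a uniform pair), this entire family is computationally indistinguishable from the uniform distribution over $R_q^{2|T_z|}$, so the simulator can replace the $h_z^t$ by uniformly random values (and then set the share of a designated honest $P_z$ to force consistency with the public output, exactly as in Theorem \ref{security_decryption}).

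The main obstacle I expect is precisely this multi-sample step: one must justify that repeated exposure of the same secret $s'_z$ against freshly randomized $a$-parts and fresh smudging noise remains secure, which requires a hybrid across $t\in T_z$ rather than a single invocation of decision-RLWE. Two subsidiary points also need care. First, the smudging bound must hold uniformly across rounds; since Theorem \ref{security_decryption}'s hypothesis is a per-round statement and $T$ is polynomial, a union bound preserves negligibility. Second, after handling one honest user, the remaining honest users are addressed by the same per-user hybrid, relying on the adversary set being fixed so that the identity of $P_z$ (and of the witness honest node used for consistency patching inside each iteration) does not have to adapt across rounds. Assembling these pieces gives a simulator whose output is computationally indistinguishable from the real view over the entire $T$-round protocol, completing the proof.
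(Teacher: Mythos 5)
Your proposal follows essentially the same route as the paper: fix one honest user $P_z$, observe that its decryption shares across the rounds $T_z$ form $|T_z|$ samples $(r_z^t c_1^t,\, s'_z\, r_z^t c_1^t + e_z^t)$ with the fixed secret $s'_z$ and independent pseudo-uniform first components, and conclude indistinguishability from uniform via the multi-sample RLWE argument (Lemma~\ref{RLWE_composition_lemma}). Your additional remarks --- making the hybrid over $t \in T_z$ explicit and union-bounding the smudging losses over polynomially many rounds --- are refinements the paper leaves implicit rather than a different approach, and they correctly identify the reuse of $s'_z$ as the crux.
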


\subsection{Complexity Analysis}
\label{complexity_analysis}
The setup phase involves the transmission of secret shares. Each client creates $N$ secret shares of its local secret key and transmits it to its corresponding peer in the network. Therefore, the complexity is $O(N)$ per client. Additionally, the server receives the local public keys generated by the clients and therefore, the complexity is $O(N)$ for the server as well. As a result, the complexity of the entire network is $O(N^2)$. The aggregation phase involves three steps. At each step, first the server broadcasts the global weights to clients, then clients communicate a single message to the central node. Therefore, the complexity is $O(1)$ per client. The server receives at most $N$ messages from the users. Therefore, the complexity is $O(N)$ for the server. Table~\ref{tab:complexity_breakdown} demonstrates the complexity for different parts of the algorithm.

Assuming that the algorithm runs for a total of $T$ iterations, the complexity of the entire learning process would be $O(N^2 + NT)$. For cross-device applications, where $N >> T$, the dominating term in the complexity would be $O(N^2)$ and therefore, the complexity of entire process is quadratic in $N$. However, for cross-silo applications, where $T >> N$, the dominating term would be $O(NT)$. Therefore, our algorithm provides a solution with linear complexity. 
\begin{table}[h]
    \centering
    \begin{tabular}{c|c|c|c}
         & Client & Server & Network \\
         \hline
        Setup & $O(N)$ & $O(N)$ & $O(N^2)$ \\
        Aggregation & $O(1)$ & $O(N)$ & $O(N)$
    \end{tabular}
    \caption{Complexity of different parts of our algorithm}
    \label{tab:complexity_breakdown}
\end{table}

\section{Gradient Compression in Federated Learning}

\label{sec:compression}

In this section we discuss the application of our proposed method to federated learning, propose  a compression technique that is naturally compatible with our encryption technique and provide convergence guarantees of the training process associated with our proposed compression method. 

\subsection{Gradient Aggregation in Federated Learning}

We assume that there are $N$ client nodes, where client $i$ has dataset ${\mathcal S}_i$ for $i=0,1,\ldots, N-1$. The clients communicate with a central node who is interested in training a machine learning model using the datasets of the clients. Let ${\mathbf w}_t$ denote the parameters of this model at iteration $t$. We assume that ${\mathbf w}_0$ denote the values of the parameters at the initialization. The value of ${\mathbf w}_t$ is broadcast to the client nodes and client node $i$ computes the associated gradient vector update given by
\begin{equation}
\label{eq:grad}
{\mathbf g}_{i}^{t+1} = \nabla L\left({\mathcal S}_i, {\mathbf w}_t\right), \qquad t=0,\ldots, T-1,
\end{equation}
where $L\left({\mathcal S}_i, {\mathbf w}_t\right)$ denotes the loss function evaluated on the training set ${\mathcal S}_i$ when the model parameters are fixed to ${\mathbf w}_t$. These gradient updates ${\mathbf g}_{i}^{t+1}$ are transmitted to the central node, who then computes an aggregate\footnote{For simplicity, our rule assumes that the size of datasets ${\mathcal S}_i$ are the same for each node. If the size is different, a weighted sum must be computed. Our methods immediately extend to such setting. } $\bar{\mathbf g}^{t+1} = \frac{1}{N}\sum_{i=1}^N {\mathbf g}_{i}^{t+1}$. The aggregate gradient is then used to update the model parameters. If a simple gradient descent rule is used then we have
\begin{align}
\label{eq:update}
{\mathbf w}_{t+1} = {\mathbf w}_{t} - \gamma_t \cdot \bar{\mathbf g}^{t+1},
\end{align}
where $\gamma_t$ denotes the step size at iteration $t$. This method proceeds iteratively for $t=1,2,\dots, T$ at which point the algorithm terminates and the resulting ${\mathbf w}_T$ is the final update.  For the class of smooth non-convex loss functions, with a suitable choice of $\gamma_t$, this method converges to a local optimum at a rate $O(\frac{1}{\sqrt{N\cdot T}})$, where $N$ is the number of users. \cite{article}

\subsection{Proposed Compression Technique}
In practice the size of the gradient update vectors ${\mathbf g}_{i}^{t}$ can be very large. When training deep learning models involving millions of parameters, it could be impractical to transmit  ${\mathbf g}_{i}^{t}$ at each iteration from the client node to the central node. As a result,  compression is applied to ${\mathbf g}_{i}^{t}$ to reduce the dimensionality before transmission. A number of different approaches have been proposed in the literature. Some popular techniques include top-$k$, rand-$k$ and sketching based methods \cite{Aji2017SparseCF}\cite{NEURIPS2019_75da5036}\cite{Strom2015ScalableDD}\cite{Lin2018DeepGC}

Our proposed compression method applies a linear transformation for generating a sketch of the gradient vector as stated below:
\begin{equation}
    \label{eq:compress}
    \boldsymbol{u}^t_i = \boldsymbol{\Phi}_t \boldsymbol{g}^t_i,
\end{equation}
where $\boldsymbol{\Phi}_t \in \mathbb{R}^{s \times d}$ is a random matrix defined in the sequel. We will throughout assume that $s <  d$, so that the multiplication reduces the dimension of the gradient vector. As such, we select $\boldsymbol{\Phi}_t$ to satisfy $\mathbb{E}_{\boldsymbol{\Phi}_t}[ \boldsymbol{\Phi}_t^T \boldsymbol{\Phi}_t] = \alpha \boldsymbol{I}_d$ where $\boldsymbol{I}_d$ is a $d \times d$ identity matrix. An unbiased estimate of the gradient vector can then be computed by 
\begin{equation}
    \label{eq:Fdef}
    \hat{\boldsymbol{g}}_i^t  =  \frac{1}{\alpha} \boldsymbol{\Phi}_t^T \boldsymbol{u}_i^t. 
\end{equation}

Using this method, the encryption function~\eqref{eq:encrypt} at each client node is modified as:
\begin{equation}
    \boldsymbol{ct}^t_i = BFV.Encrypt(\boldsymbol{cpk}, \boldsymbol{u}_i^t).
\end{equation}
Notice that since the encryption operation is performed on a reduced dimensional space, it can lead to faster computation time. We will discuss this further in the experimental section.

Note the aggregated ciphertext, $\boldsymbol{ct}^t = \sum \boldsymbol{ct}_i^t$, encrypts the aggregate , $\sum_i  \boldsymbol{u}_i^t$ due to linearity of the transformation such that $\bar{\boldsymbol{u}}^t = BFV.Decrypt(s, \boldsymbol{ct}^t) = \sum {\boldsymbol{u}}_i^t $.

An unbiased estimate of the aggregate gradient can be recovered by
\begin{equation}
    \label{eq:estimate}
    \hat{\bar{\boldsymbol{g}}}^t = \frac{1}{N} \frac{1}{\alpha} \boldsymbol{\Phi}_t^T \bar{\boldsymbol{u}}^t .
\end{equation}

We note that $\boldsymbol{u}_i^t$ can be further compressed using quantization methods, such as one-bit sign quantization. In particular we can define a quantization operator:
    \begin{equation}   
    \label{eq:gs}     
    \mathcal{G}_S(x) = 
        \begin{cases}
        +1, & x \geq 0 \\ 
        -1, & x<0
        \end{cases}
    \end{equation}
and compute 
\begin{equation}
\label{eq:sign}
{\mathbf v}_i^t =  \mathcal{G}_S({\mathbf u}_i^t),
\end{equation} 
where the operator is applied element-wise. We can then reconstruct the aggregate estimate     $\hat{\bar{\boldsymbol{g}}}^t$ as in~\eqref{eq:estimate} with $\bar{\boldsymbol{u}}^t $ replaced by $\bar{\boldsymbol{v}}^t $, the aggregate of ${\mathbf v}_i^t$. We will also discuss convergence guarantees for such schemes.


We now specify the ensemble of the compression matrices $\Phi_t$. We assume that at each iteration $t$, the matrix ${\mathbf \Phi}_t$ is sampled i.i.d. from a distribution $p_{{\mathbf \Phi}}(\cdot)$ which is described as follows. Each element ${\mathbf \Phi}_{t_{ij}}$ is sampled independently of all other elements and 
\begin{equation}
    \label{phi_distribution}
    \Pr({\mathbf \Phi}_{t_{ij}}=1)= 
    \Pr({\mathbf \Phi}_{t_{ij}}=-1) = p, \qquad
    \Pr({\mathbf \Phi}_{t_{ij}}=0)=1-2p.
\end{equation}
This specific distribution is chosen because it gives us a handle on its average sparsity that can significantly help reduce the computational complexity. Further for convenience, we will define $\alpha = 2 s p$ as the expected number of non-zero entries in each column, and $ r = \frac{d}{s}$ to denote the compression ratio. Note that we assume a common random seed shared at all the nodes so that such matrices can be sampled at the start of each iteration without any need for additional communication overhead. 

We note that compression methods are originally developed to reduce the communication requirements. Sparsification based methods such as top-k and rand-k \cite{Aji2017SparseCF} are not easily compatible with our framework. However, compression based on linear dimensionality reduction is a natural choice to reduce both communication and computation in our framework. \cite{NEURIPS2019_75da5036} propose sketched-SGD as a compression method based on the count sketch data structure. Sketched-SGD provides significant reduction in communication by reducing the dimensionality of gradients. We propose a similar method with a different ensemble of compression matrices that result in a handle over compression time. Additionally, we provide an extension of our method to incorporate sign quantization and provide non-trivial convergence analysis that has not been studied before.

Before stating the convergence of our proposed compression method we note the following properties that are useful in the analysis
\begin{theorem}
    \label{phi_properties}
    If $\mathbf{g} \in \mathbb{R}^d$ is a $d$-dimensional vector, $\mathsf{\mathbf{\Phi}}$ a $s \times d$ random matrix sampled according to distribution in \eqref{phi_distribution}:
    \begin{align}
        \label{eq:expectation}
        \mathbb{E}_{ \mathbf{\Phi}}\big[\frac{1}{\alpha} \mathbf{\Phi}^T \mathbf{\Phi} \mathbf{g} \big] &= \mathbf{g} \nonumber \\ \mathbb{E}_{\mathbf{\Phi}}{\left \lVert \frac{1}{\alpha}\mathbf{\Phi}^T \mathbf{\Phi} \mathbf{g} \right \rVert}_2^2  &\leq (r+1+\frac{1}{\alpha}) {\left \lVert  \mathbf{g} \right \rVert}_2^2 \nonumber \\ 
        \mathbb{E}_{\mathbf{\Phi}}{\left \lVert \mathsf{\mathbf{\Phi}} \mathbf{g} \right \rVert}_1  &\geq \frac{\alpha}{\alpha r + 1} {\left \lVert  \mathbf{g} \right \rVert}_1.
    \end{align}
\end{theorem}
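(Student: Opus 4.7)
My plan is to establish all three claims by computing the relevant moments of the random compression matrix $\mathbf{\Phi}$. The common ingredients are the independence of the entries, their symmetry about zero (so odd moments vanish), the identities $\mathbb{E}\Phi_{ij}^2 = \mathbb{E}\Phi_{ij}^4 = 2p$, and the definitions $\alpha = 2sp$ and $r = d/s$.

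For the unbiasedness claim, I would compute $\mathbb{E}[\mathbf{\Phi}^T\mathbf{\Phi}]$ entrywise: independence and zero-mean kill off-diagonal terms, while each diagonal entry sums $s$ copies of $2p$, giving $\mathbb{E}[\mathbf{\Phi}^T\mathbf{\Phi}] = \alpha I_d$ and hence $\mathbb{E}[\frac{1}{\alpha}\mathbf{\Phi}^T\mathbf{\Phi}\mathbf{g}] = \mathbf{g}$. For the second-moment bound, I would expand $\alpha^2\,\mathbb{E}\|\frac{1}{\alpha}\mathbf{\Phi}^T\mathbf{\Phi}\mathbf{g}\|_2^2 = \mathbb{E}[\mathbf{g}^T(\mathbf{\Phi}^T\mathbf{\Phi})^2\mathbf{g}]$ as a fourfold sum over indices $(i_1, i_2, j, k, l)$ of $\Phi_{i_1 j}\Phi_{i_1 l}\Phi_{i_2 l}\Phi_{i_2 k}$ weighted by $g_j g_k$, and classify by index coincidence. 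Odd-multiplicity terms vanish by symmetry, and the surviving configurations (pairs within a single row when $i_1 = i_2$, and the all-equal pattern $j = l = k$ across distinct $i_1, i_2$) yield $\mathbb{E}[((\mathbf{\Phi}^T\mathbf{\Phi})^2)_{jk}] = 0$ for $j \neq k$ and $\alpha + \alpha^2(r + 1 - 2/s)$ on the diagonal. Contracting with $\mathbf{g}$ and discarding the negative correction $-2/s$ yields $(1/\alpha + r + 1)\|\mathbf{g}\|_2^2$ as required.

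The $\ell_1$ lower bound is the most delicate. Since the rows of $\mathbf{\Phi}$ are i.i.d., one has $\mathbb{E}\|\mathbf{\Phi}\mathbf{g}\|_1 = s\,\mathbb{E}|X|$ with $X = \sum_j \xi_j g_j$ and $\xi_j$ sampled from \eqref{phi_distribution}. The core step is a peeling recursion: writing $X = \xi_1 g_1 + Y$ with $Y$ independent of $\xi_1$ and invoking the elementary identity $|Y + g_1| + |Y - g_1| = 2\max(|Y|, |g_1|) \geq 2|g_1|$ after conditioning on $\xi_1$ yields $\mathbb{E}|X| \geq (1 - 2p)\mathbb{E}|Y| + 2p|g_1|$. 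Iterating along any permutation $\sigma$ of the coordinates gives $\mathbb{E}|X| \geq 2p\sum_{j=1}^{d}(1 - 2p)^{j-1}|g_{\sigma(j)}|$, and averaging over all $d!$ permutations collapses the coefficient into $\frac{1 - (1-2p)^d}{d}\|\mathbf{g}\|_1$. Finally, $(1-2p)^d \leq e^{-2pd}$ and $e^{-2pd} \leq 1/(1 + 2pd)$, both instances of $e^x \geq 1 + x$, reduce the coefficient to $\frac{2p}{1 + 2pd}$; multiplying by $s$ and using $\alpha r = 2pd$ then produces $\frac{\alpha}{\alpha r + 1}\|\mathbf{g}\|_1$.

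I expect the third part to be the principal obstacle. A direct Cauchy-Schwarz route, combining $\mathbb{E}X^2 \leq \mathbb{E}|X|\cdot\|\mathbf{g}\|_1$ with $\|\mathbf{g}\|_2^2 \geq \|\mathbf{g}\|_1^2 / d$, only yields $\mathbb{E}|X| \geq \frac{\alpha}{d}\|\mathbf{g}\|_1$, which is strictly weaker than the target once $2p$ is not vanishingly small. The peeling-plus-permutation-averaging route bypasses this by exploiting the exact $L_1$ geometry of the sparse Rademacher-style sum, but depends crucially on the identity for $|Y+g_1|+|Y-g_1|$ and on the exponential-inequality chain at the end to collapse the averaged bound to a clean closed form. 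Parts (1) and (2) are routine by comparison, provided the index-coincidence patterns in the fourth-moment expansion are enumerated without omission.
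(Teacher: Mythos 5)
Your proof is correct, and while parts (1) and (2) follow essentially the same route as ours (computing $\mathbb{E}[\mathbf{\Phi}^T\mathbf{\Phi}] = \alpha \mathbf{I}_d$ and the diagonal of $\mathbb{E}[(\mathbf{\Phi}^T\mathbf{\Phi})^2]$ by enumerating the even-multiplicity index patterns --- your diagonal value $\alpha + \alpha^2(r+1-2/s)$ agrees exactly with our $4p^2s(s-1)+2sp+4p^2s(d-1)$), your argument for the $\ell_1$ lower bound is genuinely different. We condition on the support size $k$ of each row, reduce to the combinatorial inequality $\sum_{\mathbf{A}\in\{-1,+1\}^k}|\mathbf{A}^T\mathbf{g}| \geq \frac{2^k}{k}\norm{\mathbf{g}}_1$ (proved by induction via $|a+b|+|a-b|\geq|a|+|b|$), and then recognize the resulting sum over $k$ as $\mathbb{E}[\frac{1}{X+1}]$ for a $\mathrm{Binomial}(d-1,2p)$ variable, which we lower-bound by Jensen's inequality. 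You instead peel one coordinate at a time from the row sum, using the sharper identity $|Y+g_1|+|Y-g_1|=2\max(|Y|,|g_1|)\geq 2|g_1|$ to get the recursion $\mathbb{E}|X|\geq(1-2p)\mathbb{E}|Y|+2p|g_1|$, then symmetrize over permutations to obtain the coefficient $\frac{1-(1-2p)^d}{d}$, and finish with two applications of $e^x\geq 1+x$. Both routes land on the same per-row constant $\frac{2p}{1+2pd}$. Your approach avoids the binomial-coefficient bookkeeping and the Jensen step entirely, and the intermediate bound $\frac{1-(1-2p)^d}{d}\norm{\mathbf{g}}_1$ is in fact slightly stronger before the final relaxation; our approach keeps the dependence on the row sparsity explicit, which makes the role of the ensemble's sparsity parameter $p$ more transparent. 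You are also right that the naive Cauchy--Schwarz route only yields $\frac{\alpha}{d}\norm{\mathbf{g}}_1$ and is insufficient; identifying that obstacle and routing around it is exactly the crux of this part.
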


Note that the first property in~\eqref{eq:expectation} was already required in the system model. The second  property allows us to bound the variance of the estimate. Furthermore, last property allows us to extend this method to incorporate one bit-sign quantization. We note that $\alpha$ can be used to trade-off convergence rate and computation complexity. Larger $\alpha$ would result in a smaller variance and therefore, better convergence rate. However, a larger $\alpha$ reduces the sparsity of compression matrix and increase the computational complexity.

\subsection{Convergence Guarantees}

In order to establish the convergence of the training algorithm in the presence of compression operation we consider a modified version of the update rule in~\eqref{eq:update}. In  particular we adapt the error feedback framework in ~\cite{Stich2018SparsifiedSW} which enables us to provide convergence  guarantees for a wide variety of compression methods. Instead of applying compression to ${\mathbf g}_i^t$, user $i$ generates error compensated as $\boldsymbol{p}_i^t = \gamma \boldsymbol{g}_i^t + \boldsymbol{e}_i^t $, where $\boldsymbol{e}_i^t \in \mathbb{R}^d$ demonstrates the error term and $\gamma$ is the step-size (assumed to be constant for simplicity). 
The error term is the difference between the compressed vector and the original vector $\boldsymbol{e}^{t+1}_i = \boldsymbol{p}_i^t - \mathcal{F}(\boldsymbol{p}_i^t)$, where ${\mathcal F}(\cdot)$  denotes the overall  operator associated with our scheme:
\begin{equation}
    \label{eq:f-def2}
    \mathcal{F}(\boldsymbol{p}_i^t) = \beta(\boldsymbol{p}_i^t) \mathbf{\Phi}^T \mathbf{\Phi} \boldsymbol{p}_i^t.
\end{equation}

The compensated gradient is compressed and transmitted to central node and the update rule applied is given as,
\begin{equation}
    \boldsymbol{w}_{t+1} = \boldsymbol{w}_t - \hat{\mathbf g}_t,
    \label{eq:update2}
\end{equation}
where, $\hat{\mathbf g}_t = \frac{1}{N} \sum_{i} \mathcal{F}( {\mathbf p}_i^t)$ is the aggregate of the error compensated gradients. 

The convergence associated with the update rule~\eqref{eq:update2} can be established under very general conditions on ${\mathcal F}(\cdot)$ as stated below.\cite{Stich2018SparsifiedSW} 

\begin{definition}
    \label{compressor_definition}
    (Compression Operator) Random function $\mathcal{F}: \mathbb{R}^d \to \mathbb{R}^d$ is called a compression operator if a positive constant $\delta \leq 1 $ exists such that $\forall \mathbf{x} \in \mathbb{R}^d$ we have
    \begin{equation}
        \mathbb{E}_{\mathcal{F}}[{\left\lVert \mathcal{F}(\mathbf{x}) - \mathbf{x} \right\rVert}_2^2] \leq (1-\delta) {\left \lVert \mathbf{x} \right \rVert}_2^2,
    \end{equation}
    where expectation is taken over randomness in the operator.
\end{definition}

Recent works \cite{basu2019qsparselocalsgd} demonstrate that when error compensated gradients are used with a compression operator ${\mathcal F}$, the training error converges at a rate that scales as $O(\frac{1}{\sqrt{NT}}) + O(\frac{1}{T \delta^2})$ and thus the leading term matches the rate of SGD (without compression). In view of this it suffices to show that our proposed scheme results in a compression operator as in Definition~\ref{compressor_definition}. Towards this end, we establish the following
\begin{theorem}
    \label{no_sign_compressor}
    (Random Linear Compressor (RLC)) The operator $\mathcal{F}(\cdot)$ defined as~\eqref{eq:f-def2} is a compressor based on Definition \ref{compressor_definition} with $\beta = \frac{1}{\alpha(r + 1+1/\alpha)}$, and the  compression coefficient $\delta = \frac{1}{r+1+1/\alpha}$.
\end{theorem}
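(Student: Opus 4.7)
The plan is to show $\mathcal{F}(\mathbf{x}) = \beta \mathbf{\Phi}^T \mathbf{\Phi} \mathbf{x}$ satisfies the compression-operator inequality by a direct second-moment expansion, then optimize the free scalar $\beta$ to extract the largest $\delta$. First I would write
\begin{equation}
\mathbb{E}_{\mathbf{\Phi}}\bigl[\lVert \mathcal{F}(\mathbf{x}) - \mathbf{x} \rVert_2^2\bigr] = \mathbb{E}_{\mathbf{\Phi}}\bigl[\lVert \mathcal{F}(\mathbf{x}) \rVert_2^2\bigr] - 2\,\langle \mathbb{E}_{\mathbf{\Phi}}[\mathcal{F}(\mathbf{x})], \mathbf{x}\rangle + \lVert \mathbf{x} \rVert_2^2,
\end{equation}
so the whole argument is reduced to computing those two expectations, both of which are already supplied by Theorem~\ref{phi_properties}.

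Next I would plug in Theorem~\ref{phi_properties}. Writing $c := \alpha\beta$ and $K := r + 1 + 1/\alpha$, the first property in~\eqref{eq:expectation} gives $\mathbb{E}_{\mathbf{\Phi}}[\mathcal{F}(\mathbf{x})] = c\,\mathbf{x}$, while the second yields $\mathbb{E}_{\mathbf{\Phi}}[\lVert \mathcal{F}(\mathbf{x}) \rVert_2^2] \le c^2 K \lVert \mathbf{x} \rVert_2^2$. Substituting into the display above,
\begin{equation}
\mathbb{E}_{\mathbf{\Phi}}\bigl[\lVert \mathcal{F}(\mathbf{x}) - \mathbf{x} \rVert_2^2\bigr] \le \bigl(c^2 K - 2c + 1\bigr)\, \lVert \mathbf{x} \rVert_2^2.
\end{equation}

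Finally I would choose $\beta$ (equivalently $c$) to minimize the quadratic $c^2 K - 2c + 1$. Setting the derivative to zero gives $c^* = 1/K$, i.e.\ $\beta^* = 1/(\alpha K) = 1/\bigl(\alpha(r+1+1/\alpha)\bigr)$, matching the statement of the theorem. At this optimum the bracket becomes $1 - 1/K$, so Definition~\ref{compressor_definition} is satisfied with $\delta = 1/K = 1/(r+1+1/\alpha)$, and since $K \ge 1$ we automatically have $\delta \in (0,1]$ as required.

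I do not anticipate a real obstacle once Theorem~\ref{phi_properties} is in hand; the argument is essentially a one-parameter quadratic minimization. The only subtlety worth flagging is that the inequality on $\mathbb{E}_{\mathbf{\Phi}}[\lVert \mathcal{F}(\mathbf{x}) \rVert_2^2]$ is one-sided (an upper bound), whereas the expectation used in the cross term is exact; this is why optimizing over $\beta$ is valid and produces a legitimate upper bound, rather than inadvertently loosening or tightening the wrong side of the inequality.
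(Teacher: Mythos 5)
Your proposal is correct and follows essentially the same route as the paper: expand the squared error, use the first property of Theorem~\ref{phi_properties} for the cross term and the second for the quadratic term, and arrive at $(1-\tfrac{1}{r'})\lVert\mathbf{x}\rVert_2^2$ with $r'=r+1+1/\alpha$. The only cosmetic difference is that you derive $\beta$ by minimizing the resulting quadratic in $c=\alpha\beta$, whereas the paper simply substitutes that value of $\beta$ and verifies the bound.
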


Note that Theorem~\ref{no_sign_compressor} establishes the convergence when we perform dimensionality reduction of the gradient vectors~\eqref{eq:compress}. We can also establish convergence when sign quantization is applied after dimensionality reduction~\eqref{eq:sign}.  In this case, the operator ${\mathcal F}(\cdot)$ in~\eqref{eq:f-def2} is modified as follows
\begin{align}
    \label{eq:operator_signed}
    \mathcal{F}(\boldsymbol{p}_i^t) = \beta(\boldsymbol{p}_i^t) \mathbf{\Phi}^T {\mathcal G}_s\left(\mathbf{\Phi} \boldsymbol{p}_i^t\right),
\end{align}
where ${\mathcal G}_s(\cdot)$ is the sign quantization function~\eqref{eq:gs}.
We can show the following result
\begin{theorem}
    \label{sign_compressor}
    (SRLC) The operator $\mathcal{F}(\cdot)$ defined as~\eqref{eq:operator_signed} is a compressor based on Definition \ref{compressor_definition} with $\beta(\boldsymbol{x}) = \frac{{\left \lVert \mathbf{x} \right \rVert }_1}{d(1 + \alpha)(1 + \alpha r)}$ and the following compression coefficient $\delta(\mathbf{x}) = \frac{\alpha}{(1+\alpha)(1+r\alpha)^2}\rho(\mathbf{\mathbf{x}})$, where we define $\rho(\mathbf{x})=\frac{\left \lVert \mathbf{x} \right \rVert_1^2}{d\left \lVert \mathbf{x} \right \rVert_2^2}$.
\end{theorem}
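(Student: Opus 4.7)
The plan is to verify Definition~\ref{compressor_definition} by expanding the squared error
$\|\mathcal{F}(\mathbf{x}) - \mathbf{x}\|_2^2 = \|\mathcal{F}(\mathbf{x})\|_2^2 - 2\langle \mathcal{F}(\mathbf{x}), \mathbf{x}\rangle + \|\mathbf{x}\|_2^2$
and taking expectation over $\mathbf{\Phi}$. The inner product term is easy because the sign operator lines up nicely with an $\ell_1$ norm: writing $\mathbf{y} = \mathcal{G}_s(\mathbf{\Phi}\mathbf{x}) \in \{\pm 1\}^s$, we have
$\langle \mathcal{F}(\mathbf{x}), \mathbf{x}\rangle = \beta(\mathbf{x})\,\langle \mathbf{y}, \mathbf{\Phi}\mathbf{x}\rangle = \beta(\mathbf{x})\,\|\mathbf{\Phi}\mathbf{x}\|_1$,
and then the third inequality of Theorem~\ref{phi_properties} yields $\mathbb{E}\langle \mathcal{F}(\mathbf{x}), \mathbf{x}\rangle \geq \beta(\mathbf{x}) \frac{\alpha}{\alpha r + 1}\|\mathbf{x}\|_1$.

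The main obstacle is bounding $\mathbb{E}\|\mathcal{F}(\mathbf{x})\|_2^2 = \beta(\mathbf{x})^2\,\mathbb{E}\|\mathbf{\Phi}^T\mathbf{y}\|_2^2$, because $\mathbf{y}$ depends on $\mathbf{\Phi}$ through the sign nonlinearity, so one cannot simply plug in the second identity of Theorem~\ref{phi_properties}. The key trick is to exploit $|y_i| = 1$: by the triangle inequality,
$\bigl|\textstyle\sum_i \Phi_{ij} y_i\bigr|^2 \leq \bigl(\sum_i |\Phi_{ij}|\bigr)^2$,
which removes all dependence on $\mathbf{y}$. For each column $j$, $\sum_i |\Phi_{ij}|$ is $\mathrm{Binomial}(s, 2p)$ with mean $\alpha$ and variance $\alpha(1-\alpha/s) \leq \alpha$, so $\mathbb{E}\bigl(\sum_i |\Phi_{ij}|\bigr)^2 \leq \alpha + \alpha^2 = \alpha(1+\alpha)$. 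Summing over the $d$ columns gives the uniform bound $\mathbb{E}\|\mathbf{\Phi}^T\mathbf{y}\|_2^2 \leq d\alpha(1+\alpha)$.

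Substituting $\beta(\mathbf{x}) = \frac{\|\mathbf{x}\|_1}{d(1+\alpha)(1+\alpha r)}$ into both terms, a direct computation gives
$\mathbb{E}\|\mathcal{F}(\mathbf{x})\|_2^2 \leq \frac{\alpha\|\mathbf{x}\|_1^2}{d(1+\alpha)(1+\alpha r)^2}$ and $2\,\mathbb{E}\langle\mathcal{F}(\mathbf{x}),\mathbf{x}\rangle \geq \frac{2\alpha\|\mathbf{x}\|_1^2}{d(1+\alpha)(1+\alpha r)^2}$,
so the cross term dominates the quadratic term by exactly a factor of two, and the excess is
$\mathbb{E}\|\mathcal{F}(\mathbf{x}) - \mathbf{x}\|_2^2 \leq \|\mathbf{x}\|_2^2 - \frac{\alpha\|\mathbf{x}\|_1^2}{d(1+\alpha)(1+\alpha r)^2}$.
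Factoring out $\|\mathbf{x}\|_2^2$ and recognizing $\rho(\mathbf{x}) = \|\mathbf{x}\|_1^2/(d\|\mathbf{x}\|_2^2)$ yields the claimed contraction factor $1-\delta(\mathbf{x})$ with $\delta(\mathbf{x}) = \frac{\alpha}{(1+\alpha)(1+\alpha r)^2}\rho(\mathbf{x})$. Note that the specific choice of $\beta(\mathbf{x})$ is essentially forced by this analysis: it is the value that makes the upper bound on $\mathbb{E}\|\mathcal{F}(\mathbf{x})\|_2^2$ equal to half of the lower bound on $2\,\mathbb{E}\langle\mathcal{F}(\mathbf{x}),\mathbf{x}\rangle$, which is the optimal scaling for contraction given these two one-sided bounds.
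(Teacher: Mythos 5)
Your proposal is correct and follows essentially the same route as the paper: the same three-term expansion, the same identity $\langle \mathcal{G}_S(\mathbf{\Phi}\mathbf{x}), \mathbf{\Phi}\mathbf{x}\rangle = \|\mathbf{\Phi}\mathbf{x}\|_1$ combined with the third property of Theorem~\ref{phi_properties} for the cross term, and the same bound $\mathbb{E}\|\mathbf{\Phi}^T\mathcal{G}_S(\mathbf{\Phi}\mathbf{x})\|_2^2 \le d\alpha(1+\alpha)$ obtained by exploiting $|y_i|=1$ (your binomial-variance computation of $\mathbb{E}(\sum_i|\Phi_{ij}|)^2 \le \alpha+\alpha^2$ is term-for-term the paper's bound $2sp + s(s-1)4p^2$). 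The only additions are presentational: the triangle-inequality framing of the quadratic term and the closing remark explaining why $\beta(\mathbf{x})$ is the forced choice.
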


\section{Addition of New Users}
Our method described so far supports the presence of any subset of $k$ users to participate in successful decryption during the aggregation step. However all these users are required to have been present during the key-setup phase. In many FL applications users can join the system at any stage. While such users can still contribute in the training process by computing gradients on their available dataset, they will not be able to contribute towards the $k$ required users for successful decryption as they do not have shares of the secret key. We discuss how our proposed scheme can accommodate such users by distributing shares of the original secret key. In fact the presence of any $k$ users with secret shares of $s$ is sufficient to add a new user in the system. 

For simplicity let the original users be denoted by the set $\{0,\ldots, N-1\}$ and let the new user be denoted by $N$. Recall that the shares of $s$ at  user $i$ is denoted by $s'_i$.
Following Prop.~\ref{secret_shares_addition}, each $s'_i$ is an evaluation of the polynomial $f(x)$ at $x= x_i$ where 
\begin{equation}
    \label{eq:test1}
    f(x) =  s + \sum_{l=1}^{k-1} t_l \cdot x^l.
\end{equation}

Next we let, $\{a_i\}_{i=0}^{k-1}$ be the set of selected users to support the addition of the new user $N$. Following Theorem~\ref{shamir_reconstruction}, we are guaranteed the existence of coefficients $\{ r_{a_i}^t \}$ and $\{ r_{a_i, l}^t \}$ to reconstructed $s$ and $t_l$ respectively, i.e., 
\begin{equation}
    \label{eq:test2}
    s = \sum_{i=0}^{k-1} s'_{a_i} r_{a_i}^t \;\;\;\;  t_l = \sum_{i=0}^{k-1} s'_{a_i} r_{a_i, l} ^t.
\end{equation} 
Client $a_i$ constructs an auxiliary polynomial defined as $\hat{f}_{a_i}(x) = r_{a_i}^t s'_{a_i} + \sum_{l=1}^{k-1} r_{a_i, l}^t s'_{a_i} x^l$. Now observe that from~\eqref{eq:test1} and~\eqref{eq:test2}, we have that $f(x)  = \sum_{j=0}^{k-1} \hat{f}_{a_j}(x)$. Based on the mentioned decomposition now we can generate an evaluation of $f(\cdot)$ at $x_{N}$ using the set of selected users. Client $a_i$ would transmit $\hat{ns}_{a_i} = \hat{f}_{a_i}(x_{N})$. Then the share at the new user can be generated by simply adding these evaluations. Note that while the transmission of $\hat{ns}_{a_i}$ occurs through the central server, we employ a public key encryption system between clients to guarantee the confidentiality of the transmitted message, preventing any potential leakage to the central server. Additionally, it is evident that our system can easily accommodate scenarios involving the addition of multiple new users based on the property of Shamir's Secret Sharing.

\section{Experiments}

In this section, we experimentally evaluate the performance of the proposed scheme. Initially, we provide results comparing different secure aggregation algorithms and demonstrate the comparison between several gradient compression scheme.
\subsection{Standalone Secure Aggregation}
\paragraph{Baseline. }
We compare our proposed RSA scheme against the following baselines: TurboAggregate \cite{DBLP:journals/corr/abs-2002-04156}, Bonawitz et al. \cite{10.1145/3133956.3133982} and Adaptive Secure Aggregation scheme (ASA) described in Section~\ref{sec:statement}.

\label{sec:exper}
\begin{figure}[h]
     \centering
     \begin{subfigure}[b]{0.32\textwidth}
         \centering
         \includegraphics[scale=0.32]{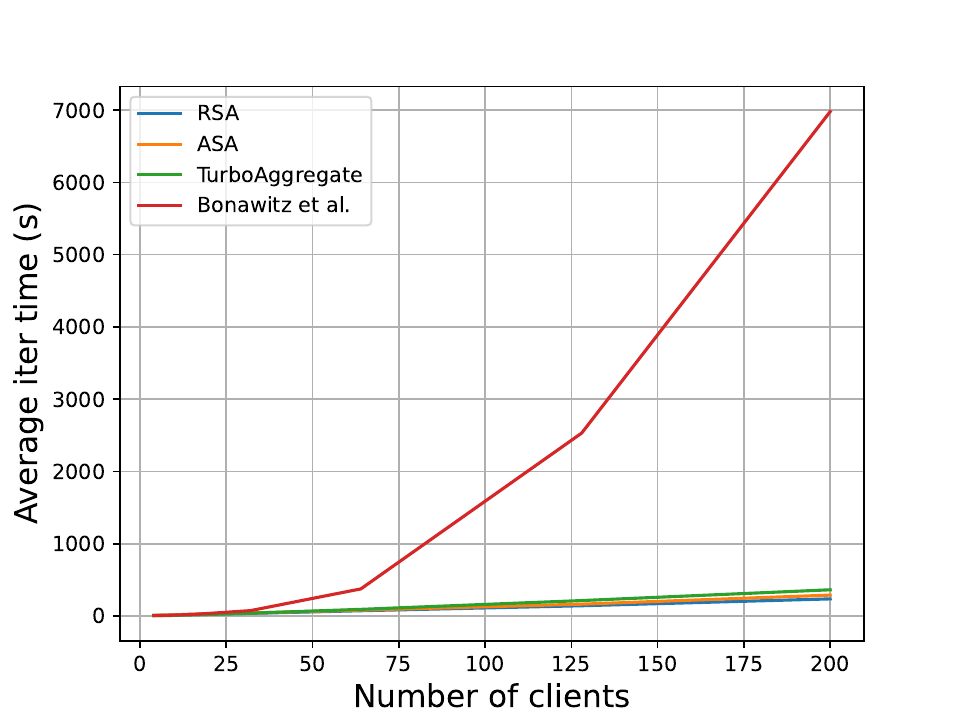}
        \caption{Comparison of all the schemes.}
        \label{}
     \end{subfigure}
     \begin{subfigure}[b]{0.32\textwidth}
        \centering
        \includegraphics[scale=0.32]{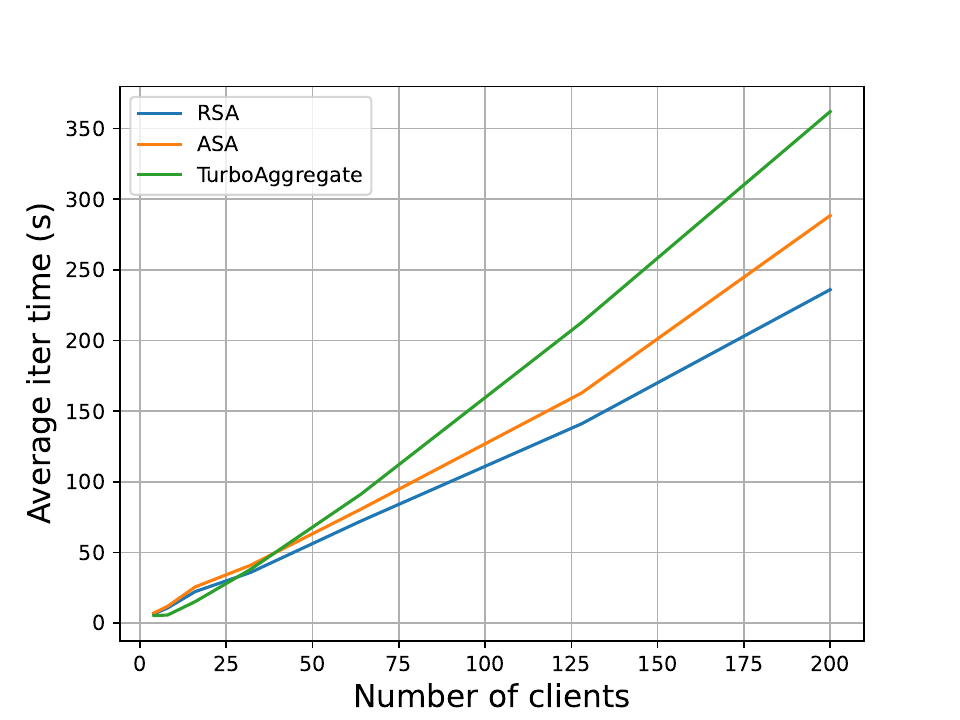}
        \caption{Comparison - exclude Bonawitz et al.}
        \label{}
    \end{subfigure}
    \caption{Total run-time of different schemes}
    \label{fig:comp_tot_runtime}
\end{figure}

\begin{figure}[h]
     \centering
     \begin{subfigure}[b]{0.32\textwidth}
         \centering
         \includegraphics[scale=0.32]{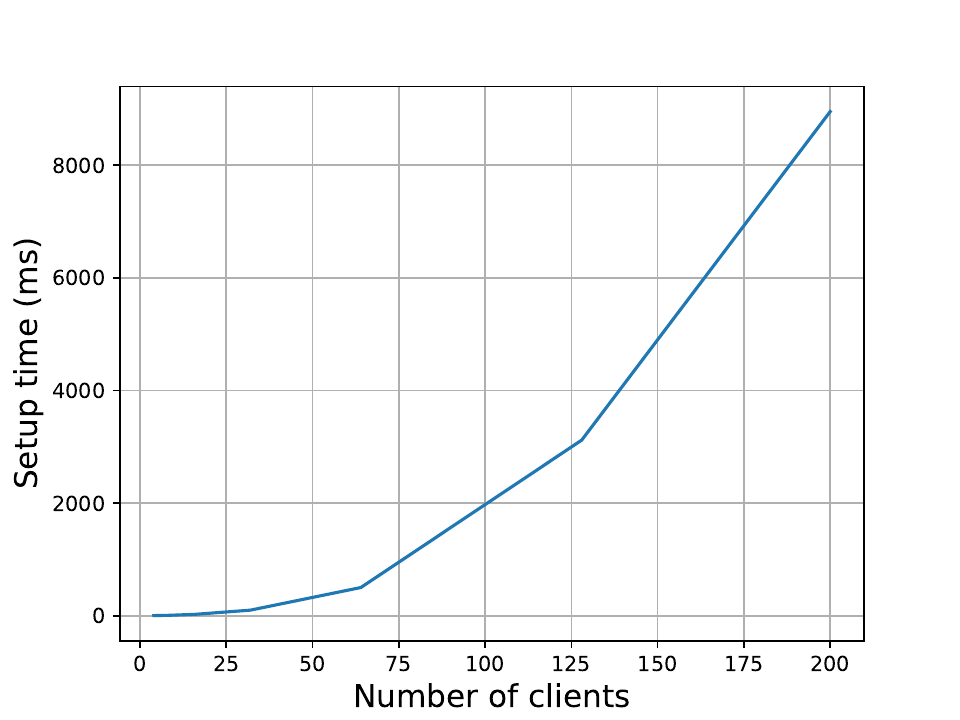}
        \caption{Setup time}
        \label{}
     \end{subfigure}\\
     \begin{subfigure}[b]{0.32\textwidth}
        \centering
        \includegraphics[scale=0.32]{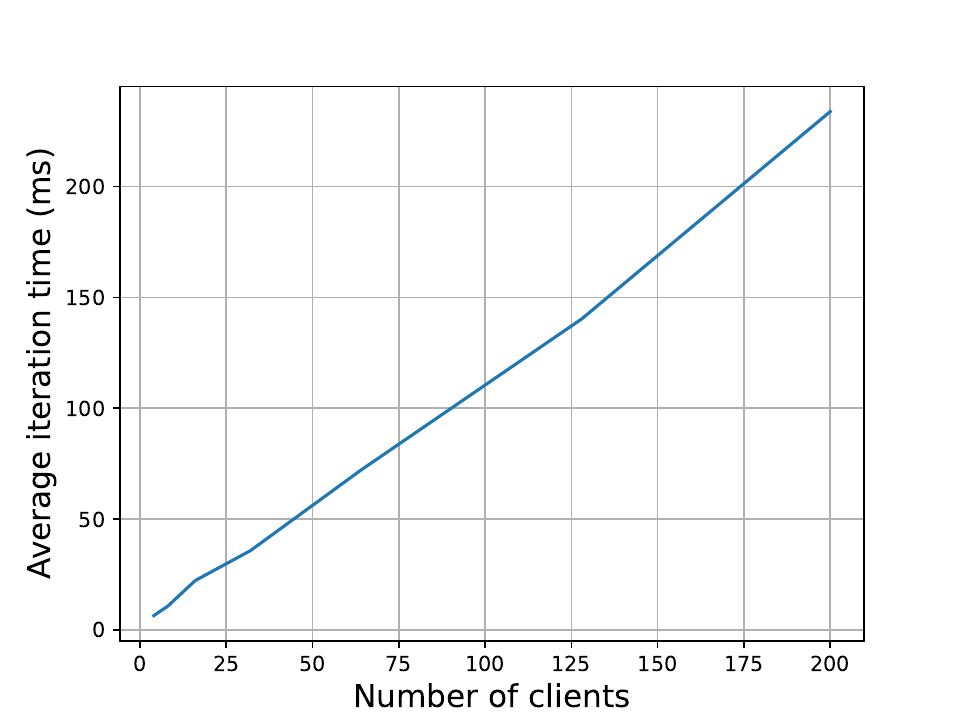}
        \caption{Average iteration time}
        \label{}
    \end{subfigure}
    \caption{Setup time and average aggregation time of RSA }
    \label{fig:RSA_breakdown}
\end{figure}

\begin{figure}[h]
    \centering
    \includegraphics[scale=0.32]{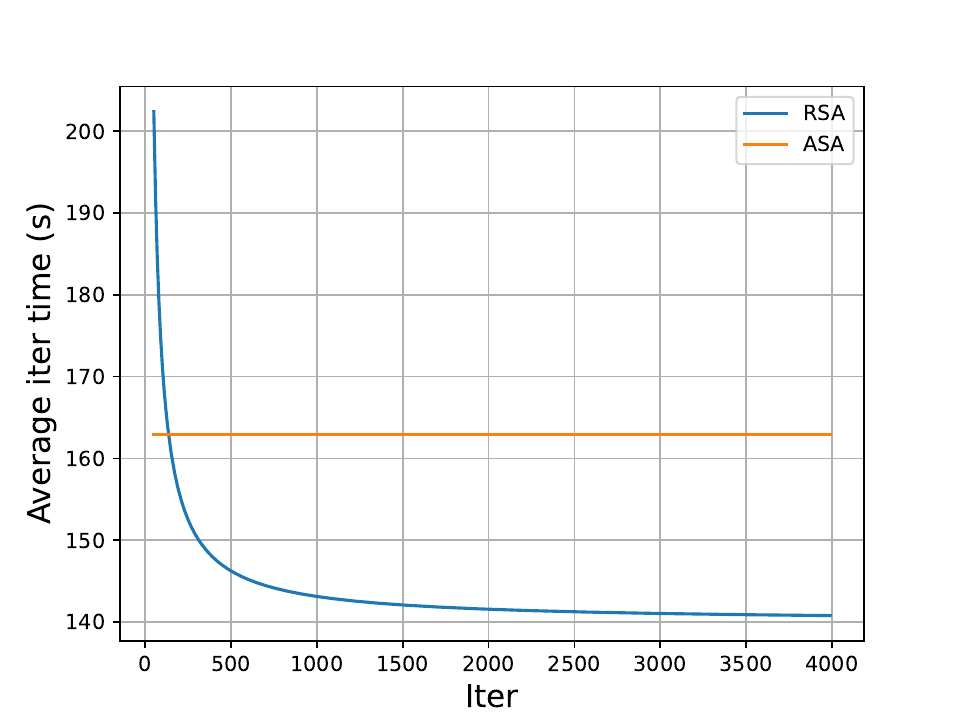}
    \caption{Effect of $T$ on average iteration time of ASA and RSA}
    \label{fig:RSA_ASA_vs_T}
\end{figure}
\paragraph{Implementation and Configurations.}
We implement all the algorithms based on Lattigo and FedML libraries \cite{lattigo} \cite{chaoyanghe2020fedml}. We implement the ASA scheme, by randomly selecting $k$ clients at each iteration, and performing the MPHE setup phase, followed by the aggregation phase. For RSA and ASA, we set the ciphertext space size $q$ to a $60$ bit prime number, the order of encryption polynomial, $n$, to $8192$, and the rest of the parameters in accordance with \cite{HomomorphicEncryptionSecurityStandard} to achieve at least $256$ bit security. Parameters of TurboAggregate are selected in accordance with \cite{DBLP:journals/corr/abs-2002-04156}. We run the experiments using the SAVI testbed (https://www.savinetwork.ca) with up to $200$ clients each holding a sequence of $T=4000$ random integer vectors of size $200,000$ and report the average iteration time as the total run-time, including the setup time, normalized by the number of iterations. We set the dropout rate to be $25\%$ such that $k=\frac{3}{4}N$ for all the algorithms.  We note that the MPHE setup phase does not include the Shamir's secret sharing step. 

\paragraph{Performance.}
Figure~\ref{fig:comp_tot_runtime} compares the average run-time of different schemes. Even though the method proposed by Bonawitz et al. demonstrates a better performance for smaller values of $N$, the performance drops dramatically for larger number of clients due to the quadratic complexity. Furthermore, we observe that both RSA and ASA demonstrate a linear behavior. However, RSA consistently delivers better run-time compared to ASA because of better round efficiency. We note that this is consistent with the results of Section~\ref{complexity_analysis}, as $T >> N$ in our setup.
Further comparing ASA and RSA, figure~\ref{fig:RSA_ASA_vs_T} plots the average iteration time of both schemes for different values of $T$ when $N=128$ and $k=96$. We observe that for smaller values of $T$, the setup time becomes the dominating factor and adversely affects the performance of RSA. However, increasing $T$, RSA achieves a better performance compared to ASA.

Further analyzing the behavior of RSA, Figure~\ref{fig:RSA_breakdown} demonstrates the behavior of setup time and average aggregation time of RSA as the number of clients changes. We observe the quadratic behavior of setup time and linear behavior of average iteration time.

\subsection{Compression}
In this section we initially provide results on the performance of the proposed compression operators. Then, we experimentally evaluate how compression can improve the run-time of secure aggregation in federate learning.
\subsubsection{Compression Operators}
\begin{table}[h]
    \centering
    \begin{tabular}{c|c|c}
        Method & Test Accuracy & Compression Time \\
        \hline
        Baseline SGD & $95.48\%$ & $0$\\ 
        \hline
        sketched-SGD, x1000 & $94.95\%$ & $19.67s$\\ 
        SRLC, x1000, $\alpha=1$ & $95.32\%$ & $20.6s$\\
        SRLC, x1000, $\alpha=0.01$ & $95.28\%$ & $0.25s$\\
        
        \hline
        sketched-SGD, x2000 & $94.48\%$ & $19.43s$ \\
        SRLC, x2000 , $\alpha=1$ & $95.26\%$ & $19.64s$\\
        SRLC, x2000 , $\alpha=0.01$ & $95.20\%$ & $0.24s$ 
        
    \end{tabular}
    
    \caption{Test accuracy and per epoch compression time for training with different methods}
    \label{tab:sgd_sketchedsgd_srlc}
\end{table}

We compare the performance of our proposed compression scheme called Sign-quantization Random Linear Compressor (SRLC), with uncompressed SGD and sketched-SGD using Tensorflow\cite{tensorflow2015-whitepaper}. We select the parameters of sketched-SGD in accordance with \cite{NEURIPS2019_75da5036}. We train EfficientNetB4 \cite{DBLP:journals/corr/abs-1905-11946} with about $17.5$M parameters over CIFAR-10 \cite{Krizhevsky2009LearningML} using $8$ workers each having a batch size of $32$, each worker is simulated on a GPU. We train all models for $160$ epochs, each epoch consisting of $196$ iterations, using an initial learning rate of $0.1$. We decrease the learning rate by a factor of $10$ at epochs $80$ and $120$. 
Table \ref{tab:sgd_sketchedsgd_srlc} summarizes the final test accuracy and per epoch compression time achieved by different training methods and levels of compression. We use $\alpha \in \{0.01, 1\}$ for training models using SRLC. Considering $\alpha=1$, SRLC and sketched-SGD have the same computational complexity. We observe that under the same compression time and communication requirements, SRLC achieves a better test accuracy compared to sketched-SGD. Interestingly in our proposed scheme, selecting $\alpha=0.01$ significantly reduces compression time without sacrificing the accuracy. 

\subsubsection{Compressed Secure Federated Learning}
\begin{table}[h]
    \centering
    \begin{tabular}{c|c|c}
        Method & Iteration Time (s) & Accuracy \\ \hline
        SGD & $2.1$  & $96.3\%$   \\ \hline
        $r=1$ & $12.3$ & $95.4\%$ \\
        $r=5$ & $6.7$  & $95.4\%$ \\
        $r=10$ & $4.4$  & $95.2\%$ \\
        $r=50$ & $3.3$  & $94.9\%$ \\
        $r=75$ & $2.9$  & $94.5\%$ \\
    \end{tabular}
    \caption{Effect of compression on performance of the scheme}
    \label{exp_compression_table}
\end{table}

In this part, we use the same parameters as of section \ref{sec:exper} for aggregation. We train a convolutional neural network with about $600$K parameters and MNIST dataset\cite{lecun2010mnist}. We perform training using $8$ clients and a batch size of $256$ per client. We simulate each worker on a different machine with 8 vCPUs and no access to GPU. We set the learning rate to be $0.1$ throughout the learning process and train for $300$ iterations. The dropout rate is set to be $50\%$ such that $k=4$, meaning that at least half of the network should be active for decryption. In order to convert floating-point gradients to integers, we multiply each element by a scaling factor of $10^3$ and round it to the nearest integer. We further clip the gradients with $10^3$ based on the magnitude in order to avoid overflow. We train using RLC compressor with $\alpha=0.1$ and utilize the error feedback method. We observe that the run-time for the secure algorithm is about $6$ times longer than the plain SGD. Examining the effect of compression, we train with different values of compression ratio $r \in \{ 5, 10, 50, 75 \}$. We note that any compression ratio larger than $75$ would not yield any additional advantage, as we need to perform encryption at least once. Table \ref{exp_compression_table} exhibits the effect of compression on the performance. As we can see, with an increase in the compression ratio, the average iteration time drops because the number of sequential encryption operations is reduced by a factor of $r$. We also observe a negligible drop in final accuracy with an increase in compression ratio. We note that much higher compression ratios are possible as shown in previous sections. Therefore, this method is applicable for much larger models without any significant drop in terms of accuracy. We further note that lower accuracy compared to SGD is partly due to the additional noise imposed in the quantization process. 



\section{Conclusion}

In this work, we propose a cryptographic secure aggregation protocol called robust secure aggregation (RSA) for gradient aggregation in Federated Learning by novelly combining shamir's secret sharing and MPHE scheme. RSA addresses two key issue in Federated Learning. It not only supports clients dropped-out when a fraction of clients involved in each training iteration and also can be extended to accommodate new client joining. Additionally, RSA shows superiority in communication complexity over prior works, especially for long-term training. We also propose a compression scheme named SRLC for gradient vectors at each client node to reduce the communication overhead introduced by cryptographic tool.

\section{Acknowledgements}
We wish to thank and acknowledge the help provided by professor Shweta Agrawal in validating the security analysis provided in this work.

\bibliographystyle{ieeetr}
\bibliography{double_column}

\begin{thebibliography}{10}

\bibitem{DBLP:journals/corr/McMahanMRA16}
H.~B. McMahan, E.~Moore, D.~Ramage, and B.~A. y~Arcas, ``Federated learning of
  deep networks using model averaging,'' {\em CoRR}, vol.~abs/1602.05629, 2016.

\bibitem{DBLP:conf/ccs/FredriksonJR15}
M.~Fredrikson, S.~Jha, and T.~Ristenpart, ``Model inversion attacks that
  exploit confidence information and basic countermeasures,'' in {\em
  Proceedings of the 22nd {ACM} {SIGSAC} Conference on Computer and
  Communications Security, Denver, CO, USA, October 12-16, 2015} (I.~Ray,
  N.~Li, and C.~Kruegel, eds.), pp.~1322--1333, {ACM}, 2015.

\bibitem{DBLP:journals/corr/abs-1805-04049}
L.~Melis, C.~Song, E.~D. Cristofaro, and V.~Shmatikov, ``Inference attacks
  against collaborative learning,'' {\em CoRR}, vol.~abs/1805.04049, 2018.

\bibitem{10.1007/978-3-642-29011-4_29}
G.~Asharov, A.~Jain, A.~L{\'o}pez-Alt, E.~Tromer, V.~Vaikuntanathan, and
  D.~Wichs, ``Multiparty computation with low communication, computation and
  interaction via threshold fhe,'' in {\em Advances in Cryptology -- EUROCRYPT
  2012} (D.~Pointcheval and T.~Johansson, eds.), (Berlin, Heidelberg),
  pp.~483--501, Springer Berlin Heidelberg, 2012.

\bibitem{cryptoeprint:2017:956}
D.~Boneh, R.~Gennaro, S.~Goldfeder, A.~Jain, S.~Kim, P.~M.~R. Rasmussen, and
  A.~Sahai, ``Threshold cryptosystems from threshold fully homomorphic
  encryption.'' Cryptology ePrint Archive, Report 2017/956, 2017.
\newblock \url{https://ia.cr/2017/956}.

\bibitem{cryptoeprint:2013:094}
A.~Lopez-Alt, E.~Tromer, and V.~Vaikuntanathan, ``On-the-fly multiparty
  computation on the cloud via multikey fully homomorphic encryption.''
  Cryptology ePrint Archive, Report 2013/094, 2013.
\newblock \url{https://ia.cr/2013/094}.

\bibitem{cryptoeprint:2015:345}
P.~Mukherjee and D.~Wichs, ``Two round multiparty computation via multi-key
  fhe.'' Cryptology ePrint Archive, Report 2015/345, 2015.
\newblock \url{https://ia.cr/2015/345}.

\bibitem{10.1145/3133956.3133982}
K.~Bonawitz, V.~Ivanov, B.~Kreuter, A.~Marcedone, H.~B. McMahan, S.~Patel,
  D.~Ramage, A.~Segal, and K.~Seth, ``Practical secure aggregation for
  privacy-preserving machine learning,'' in {\em Proceedings of the 2017 ACM
  SIGSAC Conference on Computer and Communications Security}, CCS '17, (New
  York, NY, USA), p.~1175–1191, Association for Computing Machinery, 2017.

\bibitem{choi2020communicationcomputation}
B.~Choi, J.~yong Sohn, D.-J. Han, and J.~Moon, ``Communication-computation
  efficient secure aggregation for federated learning,'' 2020.

\bibitem{DBLP:journals/corr/abs-2002-04156}
J.~So, B.~Guler, and A.~S. Avestimehr, ``Turbo-aggregate: Breaking the
  quadratic aggregation barrier in secure federated learning,'' {\em CoRR},
  vol.~abs/2002.04156, 2020.

\bibitem{10.1145/3372297.3417885}
J.~H. Bell, K.~A. Bonawitz, A.~Gasc\'{o}n, T.~Lepoint, and M.~Raykova, ``Secure
  single-server aggregation with (poly)logarithmic overhead,'' in {\em
  Proceedings of the 2020 ACM SIGSAC Conference on Computer and Communications
  Security}, CCS '20, (New York, NY, USA), p.~1253–1269, Association for
  Computing Machinery, 2020.

\bibitem{chen2024secure}
S.~Chen, Y.~Ju, Z.~Zhu, and A.~Khisti, ``Secure inference for vertically
  partitioned data using multiparty homomorphic encryption,'' {\em arXiv
  preprint arXiv:2405.03775}, 2024.

\bibitem{Dong2020EaSTFLyEA}
Y.~Dong, X.~Chen, L.~Shen, and D.~Wang, ``Eastfly: Efficient and secure ternary
  federated learning,'' {\em Comput. Secur.}, vol.~94, p.~101824, 2020.

\bibitem{9664035}
F.~Chen, P.~Li, and T.~Miyazaki, ``In-network aggregation for
  privacy-preserving federated learning,'' in {\em 2021 International
  Conference on Information and Communication Technologies for Disaster
  Management (ICT-DM)}, pp.~49--56, 2021.

\bibitem{9148628}
X.~Zhang, A.~Fu, H.~Wang, C.~Zhou, and Z.~Chen, ``A privacy-preserving and
  verifiable federated learning scheme,'' in {\em ICC 2020 - 2020 IEEE
  International Conference on Communications (ICC)}, pp.~1--6, 2020.

\bibitem{10.5555/3489146.3489179}
C.~Zhang, S.~Li, J.~Xia, W.~Wang, F.~Yan, and Y.~Liu, ``Batchcrypt: Efficient
  homomorphic encryption for cross-silo federated learning,'' in {\em
  Proceedings of the 2020 USENIX Conference on Usenix Annual Technical
  Conference}, USENIX ATC'20, (USA), USENIX Association, 2020.

\bibitem{10405290}
M.~Shang, D.~Zhang, and F.~Li, ``Decentralized distributed federated learning
  based on multi-key homomorphic encryption,'' in {\em 2023 International
  Conference on Data Security and Privacy Protection (DSPP)}, pp.~260--265,
  2023.

\bibitem{jin2024fedmlheefficienthomomorphicencryptionbasedprivacypreserving}
W.~Jin, Y.~Yao, S.~Han, J.~Gu, C.~Joe-Wong, S.~Ravi, S.~Avestimehr, and C.~He,
  ``Fedml-he: An efficient homomorphic-encryption-based privacy-preserving
  federated learning system,'' 2024.

\bibitem{Ma_2022}
J.~Ma, S.~Naas, S.~Sigg, and X.~Lyu, ``Privacy‐preserving federated learning
  based on multi‐key homomorphic encryption,'' {\em International Journal of
  Intelligent Systems}, vol.~37, p.~5880–5901, Jan. 2022.

\bibitem{10.5555/648118.746742}
I.~Damg\r{a}rd and M.~Jurik, ``A generalisation, a simplification and some
  applications of paillier's probabilistic public-key system,'' in {\em
  Proceedings of the 4th International Workshop on Practice and Theory in
  Public Key Cryptography: Public Key Cryptography}, PKC '01, (Berlin,
  Heidelberg), p.~119–136, Springer-Verlag, 2001.

\bibitem{truex2019hybrid}
S.~Truex, N.~Baracaldo, A.~Anwar, T.~Steinke, H.~Ludwig, R.~Zhang, and Y.~Zhou,
  ``A hybrid approach to privacy-preserving federated learning,'' 2019.

\bibitem{9355600}
Y.~Liu, Z.~Ma, X.~Liu, S.~Ma, S.~Nepal, R.~H. Deng, and K.~Ren, ``Boosting
  privately: Federated extreme gradient boosting for mobile crowdsensing,'' in
  {\em 2020 IEEE 40th International Conference on Distributed Computing Systems
  (ICDCS)}, pp.~1--11, 2020.

\bibitem{ma2021privacypreserving}
J.~Ma, S.-A. Naas, S.~Sigg, and X.~Lyu, ``Privacy-preserving federated learning
  based on multi-key homomorphic encryption,'' 2021.

\bibitem{tian2024lattice}
H.~Tian, Y.~Wen, F.~Zhang, Y.~Shao, and B.~Li, ``Lattice based distributed
  threshold additive homomorphic encryption with application in federated
  learning,'' {\em Computer Standards \& Interfaces}, vol.~87, p.~103765, 2024.

\bibitem{10.1145/3338501.3357371}
R.~Xu, N.~Baracaldo, Y.~Zhou, A.~Anwar, and H.~Ludwig, ``Hybridalpha: An
  efficient approach for privacy-preserving federated learning,'' in {\em
  Proceedings of the 12th ACM Workshop on Artificial Intelligence and
  Security}, AISec'19, (New York, NY, USA), p.~13–23, Association for
  Computing Machinery, 2019.

\bibitem{chen2023quadratic}
S.~Chen, A.~Modi, S.~Agrawal, and A.~Khisti, ``Quadratic functional encryption
  for secure training in vertical federated learning,'' in {\em 2023 IEEE
  International Symposium on Information Theory (ISIT)}, pp.~60--65, IEEE,
  2023.

\bibitem{10064312}
Y.~Chang, K.~Zhang, J.~Gong, and H.~Qian, ``Privacy-preserving federated
  learning via functional encryption, revisited,'' {\em IEEE Transactions on
  Information Forensics and Security}, vol.~18, pp.~1855--1869, 2023.

\bibitem{boneh2011functional}
D.~Boneh, A.~Sahai, and B.~Waters, ``Functional encryption: Definitions and
  challenges,'' in {\em Theory of Cryptography: 8th Theory of Cryptography
  Conference, TCC 2011, Providence, RI, USA, March 28-30, 2011. Proceedings 8},
  pp.~253--273, Springer, 2011.

\bibitem{10.1145/359168.359176}
A.~Shamir, ``How to share a secret,'' {\em Commun. ACM}, vol.~22, p.~612–613,
  Nov. 1979.

\bibitem{cryptoeprint:2012:144}
J.~Fan and F.~Vercauteren, ``Somewhat practical fully homomorphic encryption.''
  Cryptology ePrint Archive, Report 2012/144, 2012.
\newblock https://eprint.iacr.org/2012/144.

\bibitem{10.1145/2535925}
V.~Lyubashevsky, C.~Peikert, and O.~Regev, ``On ideal lattices and learning
  with errors over rings,'' {\em J. ACM}, vol.~60, Nov. 2013.

\bibitem{Mouchet2020MultipartyHE}
C.~Mouchet, J.~Troncoso-Pastoriza, and J.~Hubaux, ``Multiparty homomorphic
  encryption: From theory to practice,'' {\em IACR Cryptol. ePrint Arch.},
  vol.~2020, p.~304, 2020.

\bibitem{Lindell2017}
Y.~Lindell, {\em How to Simulate It -- A Tutorial on the Simulation Proof
  Technique}, pp.~277--346.
\newblock Cham: Springer International Publishing, 2017.

\bibitem{article}
M.~Li, T.~Zhang, Y.~Chen, and A.~Smola, ``Efficient mini-batch training for
  stochastic optimization,'' {\em Proceedings of the ACM SIGKDD International
  Conference on Knowledge Discovery and Data Mining}, 08 2014.

\bibitem{Aji2017SparseCF}
A.~F. Aji and K.~Heafield, ``Sparse communication for distributed gradient
  descent,'' {\em ArXiv}, vol.~abs/1704.05021, 2017.

\bibitem{NEURIPS2019_75da5036}
N.~Ivkin, D.~Rothchild, E.~Ullah, V.~braverman, I.~Stoica, and R.~Arora,
  ``Communication-efficient distributed sgd with sketching,'' in {\em Advances
  in Neural Information Processing Systems} (H.~Wallach, H.~Larochelle,
  A.~Beygelzimer, F.~d\textquotesingle Alch\'{e}-Buc, E.~Fox, and R.~Garnett,
  eds.), vol.~32, Curran Associates, Inc., 2019.

\bibitem{Strom2015ScalableDD}
N.~Strom, ``Scalable distributed dnn training using commodity gpu cloud
  computing,'' in {\em INTERSPEECH}, 2015.

\bibitem{Lin2018DeepGC}
Y.~Lin, S.-W. Han, H.~Mao, Y.~Wang, and W.~J. Dally, ``Deep gradient
  compression: Reducing the communication bandwidth for distributed training,''
  {\em ArXiv}, vol.~abs/1712.01887, 2018.

\bibitem{Stich2018SparsifiedSW}
S.~U. Stich, J.-B. Cordonnier, and M.~Jaggi, ``Sparsified sgd with memory,'' in
  {\em NeurIPS}, 2018.

\bibitem{basu2019qsparselocalsgd}
D.~Basu, D.~Data, C.~Karakus, and S.~Diggavi, ``Qsparse-local-sgd: Distributed
  sgd with quantization, sparsification, and local computations,'' 2019.

\bibitem{lattigo}
``Lattigo v2.1.1.'' Online: http://github.com/ldsec/lattigo, Dec. 2020.
\newblock EPFL-LDS.

\bibitem{chaoyanghe2020fedml}
C.~He, S.~Li, J.~So, M.~Zhang, H.~Wang, X.~Wang, P.~Vepakomma, A.~Singh,
  H.~Qiu, L.~Shen, P.~Zhao, Y.~Kang, Y.~Liu, R.~Raskar, Q.~Yang, M.~Annavaram,
  and S.~Avestimehr, ``Fedml: A research library and benchmark for federated
  machine learning,'' {\em arXiv preprint arXiv:2007.13518}, 2020.

\bibitem{HomomorphicEncryptionSecurityStandard}
M.~Albrecht, M.~Chase, H.~Chen, J.~Ding, S.~Goldwasser, S.~Gorbunov, S.~Halevi,
  J.~Hoffstein, K.~Laine, K.~Lauter, S.~Lokam, D.~Micciancio, D.~Moody,
  T.~Morrison, A.~Sahai, and V.~Vaikuntanathan, ``Homomorphic encryption
  security standard,'' tech. rep., HomomorphicEncryption.org, Toronto, Canada,
  November 2018.

\bibitem{tensorflow2015-whitepaper}
M.~Abadi, A.~Agarwal, P.~Barham, E.~Brevdo, Z.~Chen, C.~Citro, G.~S. Corrado,
  A.~Davis, J.~Dean, M.~Devin, S.~Ghemawat, I.~Goodfellow, A.~Harp, G.~Irving,
  M.~Isard, Y.~Jia, R.~Jozefowicz, L.~Kaiser, M.~Kudlur, J.~Levenberg,
  D.~Man\'{e}, R.~Monga, S.~Moore, D.~Murray, C.~Olah, M.~Schuster, J.~Shlens,
  B.~Steiner, I.~Sutskever, K.~Talwar, P.~Tucker, V.~Vanhoucke, V.~Vasudevan,
  F.~Vi\'{e}gas, O.~Vinyals, P.~Warden, M.~Wattenberg, M.~Wicke, Y.~Yu, and
  X.~Zheng, ``{TensorFlow}: Large-scale machine learning on heterogeneous
  systems,'' 2015.
\newblock Software available from tensorflow.org.

\bibitem{DBLP:journals/corr/abs-1905-11946}
M.~Tan and Q.~V. Le, ``Efficientnet: Rethinking model scaling for convolutional
  neural networks,'' {\em CoRR}, vol.~abs/1905.11946, 2019.

\bibitem{Krizhevsky2009LearningML}
A.~Krizhevsky, ``Learning multiple layers of features from tiny images,'' in
  {\em Learning Multiple Layers of Features from Tiny Images}, 2009.

\bibitem{lecun2010mnist}
Y.~LeCun, C.~Cortes, and C.~Burges, ``Mnist handwritten digit database,'' {\em
  ATT Labs [Online]. Available: http://yann.lecun.com/exdb/mnist}, vol.~2,
  2010.

\end{thebibliography}

\appendices

\begin{tight}
\section{Compression Proofs}

\textbf{Proof of theorem \ref{phi_properties}}  We start by proving the first property. Doing so, we will prove that 
$\mathbb{E}_{\mathsf{\mathbf{\Phi}}}\big[ \mathsf{\mathbf{\Phi}}^T \mathsf{\mathbf{\Phi}}\big] = \alpha \mathbf{I}_d$. We do this by expanding $\mathsf{\mathbf{\Phi}}^T \mathsf{\mathbf{\Phi}}$, 
\begin{align}
    \mathbb{E}_{\mathsf{\mathbf{\Phi}}}\big[ (\mathsf{\mathbf{\Phi}}^T \mathsf{\mathbf{\Phi}})_{ij}\big] = \mathbb{E}_{\mathsf{\mathbf{\Phi}}}\big[ \sum_k \mathsf{\mathbf{\Phi}}_{ki} \mathsf{\mathbf{\Phi}}_{kj} \big]
    = 2sp \mathbf{I}_d
\end{align}
We prove the second property by showing that $\mathbb{E}_{\mathsf{\mathbf{\Phi}}}\big[ (\mathsf{\mathbf{\Phi}}^T \mathsf{\mathbf{\Phi}})^2\big] = \big(4p^2s(s-1)+2sp+4p^2s(d-1)\big) \mathbf{I}_d$
. We show this by expanding $(\mathsf{\mathbf{\Phi}}^T \mathsf{\mathbf{\Phi}})^2$,
\begin{align}
    \mathbb{E}_{\mathsf{\mathbf{\Phi}}}\big[ (\mathsf{\mathbf{\Phi}}^T \mathsf{\mathbf{\Phi}})^2_{ij}\big] &= \mathbb{E}_{\mathsf{\mathbf{\Phi}}}\big[ \sum_{k,l,t} \mathsf{\mathbf{\Phi}}_{li}\mathsf{\mathbf{\Phi}}_{lk}\mathsf{\mathbf{\Phi}}_{tk}\mathsf{\mathbf{\Phi}}_{tj}\big] \nonumber \\
    &= \sum_{l = t, i=j=k} 2p + \sum_{l= t, i=j\ne k}4p^2 + \sum_{l\ne t, i=j=k} 4p^2 \nonumber \\
    &= 4p^2s(s-1)+2sp+4p^2s(d-1)
\end{align}
We complete the proof by noting $4p^2s(s-1)+2sp+4p^2s(d-1) \leq 4p^2 s^2 + 2sp + 4p^2sd \leq \alpha^2(r+1+1/\alpha)$.
Proving the third property, first we show that 
\begin{align}
    \label{lemma_for_norm_1}
    \sum_{\mathbf{A}_i \in \{-1,+1\}^d} |\mathbf{A}_i^T \mathbf{g}| \geq \frac{2^d}{d} \norm{\mathbf{g}}_1 
\end{align}
We show this by induction on $d$. The case of $d=1$ is trivially satisfied with equality. Now we assume that the relation is true for $d$. Let $\mathbf{g}$ denote a $d+1$ dimensional vector and $\mathbf{g}'$ be a $d$ dimensional vector identical to $\mathbf{g}$ with $j$-th element removed, so that, $\mathbf{g}' = (g_1, ..., g_{j-1}, g_{j+1}, ..., g_{d+1})$. 
\begin{align}
    \label{induction_rule}
    \sum_{\mathbf{A}_i \in \{-1,+1\}^{d+1}} |\mathbf{A}_i^T \mathbf{g}| &\geq 
    \sum_{\mathbf{A}_i \in \{-1,+1\}^{d}} |g_j + \mathbf{A}_i^T \mathbf{g}'|  \nonumber \\
    &+ |-g_j + \mathbf{A}_i^T \mathbf{g}'|  \nonumber \\
    &\geq \sum_{\mathbf{A}_i \in \{-1,+1\}^{d}} |g_j| + |\mathbf{A}_i^T \mathbf{g}'| \nonumber \\
    &\geq 2^d|g_j|  + \frac{2^d}{d} \norm{\mathbf{g}'}_1  
\end{align}
Where we use the following inequality $|a+b|+|a-b| \geq |a| + |b|$
The result at \eqref{induction_rule} holds for any $j$. We complete the proof by summing all such inequalities we have,
\begin{align}
   (d+1) \sum_{\mathbf{A}_i \in \{-1,+1\}^{d+1}} |\mathbf{A}_i^T\mathbf{g}| \geq (2^d + d\frac{2^d}{d}) \norm{\mathbf{g}}_1 
   = 2^{d+1} \norm{\mathbf{g}}_1 
\end{align}
Now we can state the proof for third property. Let $\boldsymbol{\phi}_i$ denote the $i$-th row of $\mathsf{\mathbf{\Phi}}$. We have by linearity of expectation, $\mathbb{E}_{\mathsf{\mathbf{\Phi}}}[\norm{\mathsf{\mathbf{\Phi}} \mathbf{g}}_1 ] = \sum_i \mathbb{E}_{\boldsymbol{\phi}_i}[|\boldsymbol{\phi}_i^T \mathbf{g}|]$. Let $\Omega_k$ denote the set of all possible $\boldsymbol{\phi}_i$ that have $k$ non-zero elements. Based on \eqref{lemma_for_norm_1} we can show that, $\sum_{\boldsymbol{\phi}_i \in \Omega_k} |\boldsymbol{\phi}_i^T \mathbf{g}| \geq \frac{2^k}{k} {d-1 \choose k-1} \norm{\mathbf{g}}_1$. Using this result we expand the expectation as
\begin{align}
    \mathbb{E}_{\mathsf{\mathbf{\Phi}}}[\norm{\mathsf{\mathbf{\Phi}} \mathbf{g}}_1 ] = \sum_i \sum_{k=1}^d \sum_{\mathbf{\phi}_i \in \Omega_k} \mathbb{P}[\boldsymbol{\phi}_i] \big|\boldsymbol{\phi}_i^T \mathbf{g}\big| \nonumber \\
    \geq \sum_{i,k=1}^d 
    \frac{2^k}{k} {d-1 \choose k-1} p^k (1-2p)^{d-k} \norm{\mathbf{g}}_1 \nonumber \\ \nonumber
    \geq \sum_{i, k'=0}^{d-1}
    \frac{1}{k'+1} {d-1 \choose k'} (2p)^{k'+1} (1-2p)^{d-k'-1}\norm{\mathbf{g}}_1 \\
    \geq 2p \sum_{i,k'=0}^{d-1}
    \frac{1}{k'+1} {d-1 \choose k'} (2p)^{k'} (1-2p)^{d-k'-1}\norm{\mathbf{g}}_1
\end{align}
Let $X = X_1 + X_2 + ... + X_{d-1}$ be a random variable denoting the sum of $d-1$ i.i.d. $\text{Bernoulli}(2p)$ random variables. We can show that, $\mathbb{E}[\frac{1}{X+1}] = \sum_{k'=0}^{d-1}\frac{1}{k'+1} {d-1 \choose k'} (2p)^{k'} (1-2p)^{d-k'-1} $
.Then by Jensen's inequality we have, $\mathbb{E}[\frac{1}{X+1}] \geq \frac{1}{\mathbb{E}[X]+1} = \frac{1}{2p(d-1)+1} \geq \frac{1}{2pd+1}$. Plugging in the result completes the proof,
\begin{align}
    \mathbb{E}_{\mathsf{\mathbf{\Phi}}}[\norm{\mathsf{\mathbf{\Phi}} \mathbf{g}}_1 ] &\geq \sum_i 2p\frac{1}{2pd+1}\norm{\mathbf{g}}_1 \nonumber \\
    &= \frac{2sp}{2pd+1}
    \norm{\mathbf{g}}_1 = \frac{\alpha}{\alpha r+1} \norm{\mathbf{g}}_1
\end{align}
Which completes our proof.

\textbf{Proof of theorem \ref{no_sign_compressor}} Following the definition of a compressor and $r' = r+1+1/\alpha$,
\begin{align}
    \mathbb{E}_{{\mathsf{\mathbf{\Phi}}}}[\norm{\frac{1}{\alpha r'}{\mathsf{\mathbf{\Phi}}}^T {\mathsf{\mathbf{\Phi}}} \mathbf{x} - \mathbf{x}}_2^2] &= \mathbb{E}_{{\mathsf{\mathbf{\Phi}}}}[\frac{1}{\alpha^2r'^2} \norm{{\mathsf{\mathbf{\Phi}}}^T {\mathsf{\mathbf{\Phi}}}\mathbf{x}}_2^2] \nonumber \\
    &+ \norm{\mathbf{x}}_2^2 
     - 2\mathbb{E}_{{\mathsf{\mathbf{\Phi}}}}[\frac{1}{\alpha r'} \mathbf{x}^T{\mathsf{\mathbf{\Phi}}}^T {\mathsf{\mathbf{\Phi}}}\mathbf{x}] \nonumber \\
    &\leq \frac{1}{r'}\norm{x}_2^2 + \norm{x}_2^2 - \frac{2}{r'}\norm{x}_2^2 \nonumber \\ 
    &\leq \norm{x}_2^2[1 - \frac{1}{r'}]
\end{align}
Therefore we conclude that $\delta = \frac{1}{r'} = \frac{1}{r+1+1/\alpha}$.

\textbf{Proof of theorem \ref{sign_compressor}} Based on the definition of a compressor, 
\begin{align}
    &\mathbb{E}_{\mathsf{\mathbf{\Phi}}}[\norm{\frac{{\left \lVert \mathbf{x} \right \rVert }_1}{d(1 + \alpha)(1 + \alpha r)} \mathsf{\mathbf{\Phi}}^T \mathcal{G}_S(\mathsf{\mathbf{\Phi}} \mathbf{x}) - \mathbf{x}}_2^2] = \nonumber \\  &\mathbb{E}_{\mathsf{\mathbf{\Phi}}}\left[\norm{\frac{\norm{\mathbf{x}}_1}{d (1+\alpha)(1+\alpha r)}\mathsf{\mathbf{\Phi}}^T \mathcal{G}_S(\mathsf{\mathbf{\Phi}} \mathbf{x}) }_2^2\right] \nonumber \\
    &+ \norm{\mathbf{x}}_2^2 
      - 2
    \mathbb{E}_{\mathsf{\mathbf{\Phi}}}\left[\frac{\norm{\mathbf{x}}_1}{d (1+\alpha)(1+\alpha r)} \mathbf{x}^T\mathsf{\mathbf{\Phi}}^T \mathcal{G}_S(\mathsf{\mathbf{\Phi}} \mathbf{x})\right]
\end{align}
We proceed by upper bounding $
    \mathbb{E}_{\mathsf{\mathbf{\Phi}}}[\norm{\mathsf{\mathbf{\Phi}}^T \mathcal{G}_S(\mathsf{\mathbf{\Phi}} \mathbf{x}) }_2^2]$
    by calculating each of its elements,
\begin{align}
    \mathbb{E}_{\mathsf{\mathbf{\Phi}}}\left[(\mathsf{\mathbf{\Phi}}^T \mathcal{G}_S(\mathsf{\mathbf{\Phi} \mathbf{x}}))_i^2\right] &= \mathbb{E}_{\mathsf{\mathbf{\Phi}}} \left[ \left(\sum_j {\mathsf{\mathbf{\Phi}}}_{ji} \mathcal{G}_S(\mathsf{\mathbf{\Phi} \mathbf{x}})_j\right)^2\right] \nonumber \\
    &= \mathbb{E}_{\mathsf{\mathbf{\Phi}}}\left[\sum_j {\mathsf{\mathbf{\Phi}}}_{ji}^2\right] \nonumber \\
    &+ \mathbb{E}_{\mathsf{\mathbf{\Phi}}}\left[\sum_{l \ne k} {\mathsf{\mathbf{\Phi}}}_{ki}{\mathsf{\mathbf{\Phi}}}_{li}\mathcal{G}_S(\mathsf{\mathbf{\Phi} \mathbf{x}})_k\mathcal{G}_S(\mathsf{\mathbf{\Phi} \mathbf{x}})_l\right] \nonumber \\
    &\leq \alpha + \mathbb{E}_{\mathsf{\mathbf{\Phi}}}\left[\sum_{l \ne k} \left| {\mathsf{\mathbf{\Phi}}}_{ki}{\mathsf{\mathbf{\Phi}}}_{li}  \right|\right]  \nonumber \\
    &\leq \alpha + s(s-1)4p^2 
    \leq \alpha^2 + \alpha
\end{align}
Therefore, we conclude that $\mathbb{E}_{\mathsf{\mathbf{\Phi}}}\left[\norm{\mathsf{\mathbf{\Phi}}^T \mathcal{G}_S(\mathsf{\mathbf{\Phi}} \mathbf{x}) }_2^2\right] \leq d\alpha(\alpha + 1)$. Now we bound $\mathbb{E}_{\mathsf{\mathbf{\Phi}}}[\mathbf{x}^T\mathsf{\mathbf{\Phi}}^T \mathcal{G}_S(\mathsf{\mathbf{\Phi}} \mathbf{x})]$ using the third property of compression matrix,
\begin{align}
    \mathbb{E}_{\mathsf{\mathbf{\Phi}}}\left[\mathbf{x}^T\mathsf{\mathbf{\Phi}}^T \mathcal{G}_S(\mathsf{\mathbf{\Phi}} \mathbf{x})\right] = 
    \mathbb{E}_{\mathsf{\mathbf{\Phi}}}\left[\norm{\mathsf{\mathbf{\Phi}} \mathbf{x}}_1\right] \geq \frac{\alpha}{1+\alpha r} \norm{\mathbf{x}}_1
\end{align}
Now applying the bounds to complete the proof.
\begin{align}
    \mathbb{E}_{\mathsf{\mathbf{\Phi}}}\left[\norm{\frac{{\left \lVert \mathbf{x} \right \rVert }_1}{d(1 + \alpha)(1 + \alpha r)} \mathsf{\mathbf{\Phi}}^T \mathcal{G}_S(\mathsf{\mathbf{\Phi}} \mathbf{x}) - \mathbf{x}}_2^2\right] \nonumber \\
    \leq \frac{\norm{\mathbf{x}}_1^2 d \alpha (1+\alpha)}{d^2 (1+\alpha)^2 (1+\alpha r)^2}
     + \norm{\mathbf{x}}_2^2 
     -2\frac{\alpha \norm{\mathbf{x}}_1^2}{d(1+\alpha)(1+r \alpha)^2} \nonumber \\
    \leq  \norm{\mathbf{x}}_2^2 - \frac{\alpha \norm{\mathbf{x}}_1^2}{d(1+\alpha)(1+r \alpha)^2} \nonumber \\
    \leq  \norm{\mathbf{x}}_2^2 \left[1 - \frac{\alpha \norm{\mathbf{x}}_1^2}{d\norm{\mathbf{x}}_2^2(1+\alpha)(1+r \alpha)^2}\right] \nonumber \\
    \leq
    \norm{\mathbf{x}}_2^2 \left [1 - \frac{\alpha}{(1+\alpha)(1+r \alpha)^2}\rho(\mathbf{x})\right ]
\end{align}
Which completes the proof.

\end{tight}
\end{document}